\spnewtheorem{claim}[theorem]{Claim}{\bfseries}{\itshape}
\let\c@lemma\relax
\spnewtheorem{lemma}[theorem]{Lemma}{\bfseries}{\itshape}
\let\c@corollary\relax
\spnewtheorem{corollary}[theorem]{Corollary}{\bfseries}{\itshape}
\let\c@proposition\relax
\spnewtheorem{proposition}[theorem]{Proposition}{\bfseries}{\itshape}
\spnewtheorem{assumption}[definition]{Assumption}{\bfseries}{\itshape}
\newcommand{\tildebigO}[1]{\ensuremath{\mathcal{\tilde O}\left(#1\right)}\xspace}
\newcommand{\lattice}{\ensuremath{\mathcal{L}}\xspace}
\newcommand{\latticedim}{\ensuremath{n}\xspace}
\newcommand{\solbound}{\ensuremath{\beta}\xspace}
\newcommand{\samplesize}{\ensuremath{m}\xspace}
\newcommand{\domainsize}{\ensuremath{q}\xspace}
\newcommand{\approxfactor}{\ensuremath{\gamma}\xspace}
\newcommand{\mindist}{\ensuremath{\lambda_\latticedim(\lattice)}\xspace}
\newcommand{\ssdiff}{\ensuremath{\bigtriangleup}\xspace}
\newcommand{\wei}{\ensuremath{w}\xspace}
\newcommand{\dist}{\ensuremath{d}\xspace}
\newcommand{\ham}{\ensuremath{\mathsf{HAM}}\xspace}
\newcommand{\inlen}{\ensuremath{\ell}}
\newcommand{\rfam}{\ensuremath{\mathcal{R}}\xspace}
\newcommand{\statpar}{\ensuremath{k}\xspace}
\newcommand{\hfam}{\ensuremath{\mathcal{H}}\xspace}
\renewcommand{\sample}{\ensuremath{\mathsf{Sample}}\xspace}
\newcommand{\hash}{\ensuremath{\mathsf{Hash}}\xspace}
\newcommand{\eval}{\ensuremath{\mathsf{Eval}}\xspace}
\newcommand{\efam}{\ensuremath{\mathcal{E}}\xspace}
\newcommand{\efunc}{\ensuremath{f}\xspace}
\newcommand{\encode}{\ensuremath{\mathsf{Encode}}\xspace}
\newcommand{\decode}{\ensuremath{\mathsf{Decode}}\xspace}
\newcommand{\enclen}{\ensuremath{\mathsf{Len}_\efam}\xspace}
\newcommand{\decdiff}{\ensuremath{\mathsf{Diff}}\xspace}
\newcommand{\removeZ}{\ensuremath{\mathsf{Peel}}\xspace}
\newcommandx*{\pcas}[2][1=\ ,2=\ ]{#1\highlightkeyword[#2]{as}}
\newcommandx*{\pcor}[2][1=\ ,2=\ ]{#1\highlightkeyword[#2]{or}}
\newcommandx*{\pcand}[2][1=\ ,2=\ ]{#1\highlightkeyword[#2]{and}}
\title{Property-Preserving Hash Functions from Standard Assumptions}
\author{
	Nils Fleischhacker\inst{1}\thanks{\texttt{mail@nilsfleischhacker.de}. Funded by the Deutsche Forschungsgemeinschaft (DFG, German Research Foundation) under Germany's Excellence Strategy - EXC 2092 CASA - 390781972.}
	\and
	Kasper Green
        Larsen\inst{2}\thanks{\texttt{larsen@cs.au.dk}. Supported by
          Independent Research Fund Denmark (DFF) Sapere Aude Research
        Leader grant No 9064-00068B and a Villum Young Investigator grant.}
	\and
	Mark Simkin\inst{2}\thanks{\texttt{simkin@cs.au.dk}. Supported by
          Independent Research Fund Denmark (DFF) Sapere Aude Research
        Leader grant No 9064-00068B.}
}
\institute{
	Ruhr University Bochum
	\and
	Aarhus University
}
\begin{document}
\pagestyle{plain}

\maketitle

\begin{abstract}
Property-preserving hash functions allow for compressing long inputs $x_0$ and $x_1$ into short hashes $h(x_0)$ and $h(x_1)$ in a manner that allows for computing a predicate $P(x_0, x_1)$ given only the two hash values without having access to the original data. 
Such hash functions are said to be adversarially robust if an adversary that gets to pick $x_0$ and $x_1$ after the hash function has been sampled, cannot find inputs for which the predicate evaluated on the hash values outputs the incorrect result.

In this work we construct robust property-preserving hash functions for the hamming-distance predicate which distinguishes inputs with a hamming distance at least some threshold $t$ from those with distance less than $t$. The security of the construction is based on standard lattice hardness assumptions.

Our construction has several advantages over the best known previous construction by Fleischhacker and Simkin (Eurocrypt 2021).
Our construction relies on a single well-studied hardness assumption from lattice cryptography whereas the previous work relied on a newly introduced family of computational hardness assumptions.
In terms of computational effort, our construction only requires a small number of modular additions per input bit, whereas the work of Fleischhacker and Simkin required several exponentiations per bit as well as the interpolation and evaluation of high-degree polynomials over large fields.
An additional benefit of our construction is that the description of the hash function can be compressed to $\lambda$ bits assuming a random oracle.
Previous work has descriptions of length $\mathcal{O}(\ell \lambda)$ bits for input bit-length $\ell$, which has a secret structure and thus cannot be compressed.

We prove a lower bound on the output size of any property-preserving hash function for the hamming distance predicate.
The bound shows that the size of our hash value is not far from optimal.

\end{abstract}


\section{Introduction}\label{sec:intro}

Efficient algorithms that compress large amounts of data into small digests that preserve certain properties of the original input data are ubiquitous in computer science and hardly need an introduction. 
Sketching algorithms~\cite{STOC:AloMatSze96}, approximate membership data structures~\cite{ACM:Bloom70}, locality-sensitive hash functions~\cite{STOC:IndMot98}, streaming algorithms~\cite{SODA:Muthukrishnan03}, and compressed sensing~\cite{IEEE:Donoho06} are only a few among many examples.

Commonly, these algorithms are studied in benign settings where no adversarial parties are present. 
More concretely, these randomized algorithms usually state their (probabilistic) correctness guarantees by quantifying over all inputs and arguing that with high probability over the chosen random coins, the algorithm will behave as it should.
Importantly, the inputs to the algorithm are considered to be \emph{independent} of the random coins used.

In real world scenarios, however, the assumption of a benign environment may not be justified and an adversary may be incentivized to manipulate a given algorithm into outputting incorrect results by providing malicious inputs.
Adversaries that choose their inputs adaptively \emph{after} the random coins of the algorithm have been sampled, were previously studied in the context of sketching and streaming algorithms~\cite{STOC:MirNaoSeg08,STOC:HarWoo13,C:NaoYog15,CCS:ClaPatShr19,ITCS:BoyLaVVai19,PODS:BJWY20,PODS:BY20,EC:FleSim21}. 
These works show that algorithms which work well in benign environments are not guaranteed to work well in the presence of adaptive malicious inputs and several algorithms with security guarantees against malicious inputs were proposed.

The focus of this work are adversarially robust property-preserving hash (PPH) functions, recently introduced by Boyle, LaVigne, and Vaikuntanathan~\cite{ITCS:BoyLaVVai19}, which allow for compressing long inputs $x_0$ and $x_1$ into short hashes $h(x_0)$ and $h(x_1)$ in a manner that allows for evaluating a predicate $P(x_0, x_1)$ given only the two hash values without having access to the original data. 
A bit more concretely, a PPH function for a predicate $P : X \times X \to \bin$ is composed of a deterministic compression function $h : X \to Y$ and an evaluation algorithm $\eval : Y \times Y \to \bin$.
Such a pair of functions is said to be adversarially robust if no computationally bounded adversary $\adv$, who is given a random $(h, \eval)$ from an appropriate family, can find inputs $x_0$ and $x_1$, such that $P(x_0, x_1) \neq \eval(h(x_0), h(x_1))$.

BLV constructed PPH functions that compress inputs by a constant factor for the gap hamming predicate, which distinguishes inputs with very small hamming distance from those with a large distance\footnote{We do not care about the exact size of their gap, since we will focus on a strictly stronger predicate in this work.}.
For inputs that have neither a very small or very large distance, their construction provided no guarantees.

Subsequently Fleischhacker and Simkin~\cite{EC:FleSim21} constructed PPH functions for the exact hamming distance predicate, which distinguishes inputs with distance at least $t$ from those with distance less than $t$.
Their construction compresses arbitrarily long inputs into hash values of size $\bigO{t \secpar}$, where $\secpar$ is the computational security parameter.
Unfortunately, their construction is based on a new family of computational assumptions, which is introduced in their work, meaning that the security of their result is not well understood.
From a computational efficiency point of view, their construction is rather expensive.
It requires \bigO{\inlen} exponentiations for hashing a single $\ell$-bit long input and evaluating the predicate on the hashes requires  interpolating and evaluating high-degree polynomials over large fields.

\subsection{Our Contribution}

In this work we present a new approach for constructing PPH functions for the exact hamming distance predicate, which improves upon the result of Fleischhacker and Simkin in several ways.

The security of our construction relies on a well-studied hardness assumption from the domain of lattice-based cryptography.
Both hashing an input and evaluating a predicate on hash values only involves fast operations, such as modular additions, xor, and evaluating a few $t$-wise independent hash functions.
The size of our hash values is $\mathcal{\tilde O}(\secpar^2 t)$ bits.
We present a lower bound of roughly $\Omega(t)$ on the size of the hash value of any PPH function for the exact hamming distance predicate, showing that our result is not far from optimal.

Our hash functions can be described by a uniformly random bit string of sufficient length.
This means that, assuming a random oracle, these descriptions can compressed into $\secpar$ bits by replacing it with a short seed.
This compression is not applicable to the work of Fleischhacker and Simkin, since their hash function descriptions are bit strings with a secret structure that is only known to the sampling algorithm.

\subsection{Technical Overview}

Let $x_0$ and $x_1$ be two $\inlen$-bit strings, which we would like to compress using a hash function $h$ in a manner that allows us to use $h(x_0)$ and $h(x_1)$ to check whether $\dist(x_0, x_1) < t$, where $d$ is the hamming distance and $t$ is some threshold.
We start with a simple observation from the work of Fleischhacker and Simkin~\cite{EC:FleSim21}.
We can encode bit strings $x = x_1 x_2 \dots x_\inlen$ into sets $X= \{2i-x_{i}\mid i = 1, \dots, \inlen\}$ and for $x_0, x_1 \in \bin^\inlen$ we have that $\dist(x_0, x_1) < t$, if and only if $\abs{X_0 \ssdiff X_1} < 2t$.
Thus, from now on we can focus on hashing sets and constructing a property-preserving hash function for the symmetric set difference, which turns out to be an easier task.

Conceptually, our construction is inspired by Invertible Bloom Lookup Tables (IBLTs), which were introduced by Goodrich and Mitzenmacher~\cite{ALL:GooMit11}.
This data structure allows one to encode a set into an $\mathcal{\tilde O}(t)$ sketch with the following properties:
Two sketches can be subtracted from each other, resulting in a new sketch that corresponds to an encoding of the symmetric set difference of the original sets.
A sketch that contains at most $\bigO{t}$ many set elements can be fully decoded with high probability.

Given this data structure, one could attempt the following construction of a PPH function for the symmetric set difference predicate.
Given an input set, encode it as an IBLT.
To evaluate the symmetric set difference predicate on two hash values, subtract the two given IBLTs and attempt to decode the resulting data structure.
If decoding succeeds, then count the number of decoded elements and check, whether it's more or less than $2t$.
If decoding fails, then conclude that the symmetric set difference is too large.
The main issue with this construction is that IBLTs do not provide any correctness guarantees for inputs that are chosen adversarially.
Thus, the main contribution of this work is to construct a robust set encoding similar to IBLTs that remains secure in the presence of an adversary.

Our robust set encoding is comprised of ``random'' functions $r_i : \bin^* \to \{1, \dots, 2t\}$ for $i = 1, \dots, \statpar$ and a ``special'' collision-resistant hash function $A$.
To encode a set $X$, we generate an initially empty $\statpar \times 2t$ matrix $H$.
Each element $x \in X$ is then inserted by adding $A(x)$ in each row $i$ to column $r_i(x)$ in $H$, i.e., $H[i,r_i(x)] = H[i,r_i(x)] + A(x)$ for $i = 1, \dots, \statpar$.
To subtract two encodings, we simply subtract the two matrices entry-wise.
To decode a matrix back into a set, we repeatedly look for entries in $H$ that contain a single hash value $A(x)$, i.e., for cells $i, j$ with $\abs{H[i,j]} = A(x)$ for some $x$, and peel them away.
That is, whenever we find such an entry, we find $x$ corresponding to $A(x)$ and then remove $x$ from all positions, where it was originally inserted in $H$. 
Then we repeat the process until the matrix $H$ is empty or until the process gets stuck, because no cell contains a single set element by itself.

To prove security of our construction, we will show two things.
First, we will show that no adversary can find a pair of sets that have a small symmetric set difference, where the peeling process will get stuck. Actually, we will show something stronger, namely that such pairs do not exist with overwhelming probability over the random choices of $r_1, \dots, r_\statpar$.
Secondly, we will need to show that no (computationally bounded) adversary can find inputs, which decode incorrectly.
In particular, we will have to argue that the peeling process never decodes an element that was not actually encoded, i.e., that the sum of several hash values in some cell $H[i,j]$ never looks like $A(x)$ for some single set element $x$.
To argue that such a bad sum of hash values does not exist, one would need to pick the output length of $A$ too big in the sense that our resulting PPH function would not be compressing.
Instead, we will show that for an appropriate choice of $A$ these sums may exist, but finding them is hard and can be reduced to the computational hardness of solving the Short Integer Solution Problem~\cite{STOC:Ajtai96}, a well-studied assumption from lattice-based cryptography.


\section{Preliminaries}\label{sec:prelim}

This section introduces notation, some basic definitions and lemmas that we will use throughout this work.
We denote by $\secpar\in\NN$ the security parameter and by $\poly$ any function that is bounded by a polynomial in $\secpar$.
A function $f$ in $\secpar$ is negligible, if for every $c \in \NN$, there exists some $N\in\NN$, such that for all $\secpar>N$ it holds that $f(\secpar) < 1/\secpar^c$.
We denote by $\negl$ any negligible function.
An algorithm is PPT if it is modeled by a probabilistic Turing machine with a running time bounded by $\poly$.

We write $e_i$ to denote the $i$-th canonical unit vector, i.e. the vector of zeroes with a one in position $i$, and assume that the dimension of the vector is known from the context.
For a row vector $v$, we write $v^\intercal$ to denote its transpose.
Let $n\in\NN$, we denote by $[n]$ the set $\{1,\dots,n\}$.
Let $X,Y$ be sets, we denote by $\abs{X}$ the size of $X$ and by $X \ssdiff Y$ the symmetric set difference of $X$ and $Y$, i.e., $X \ssdiff Y = (X \cup Y)\setminus (X\cap Y) = (X \setminus Y)\cup (Y\setminus X)$.
We write $x\gets X$ to denote the process of sampling an element of $X$ uniformly at random.
For $x, y\in \bin^n$, we write $\wei(x)$ to denote the Hamming weight of $x$ and we write $\dist(x, y)$ to denote the Hamming distance between $x$ and $y$, i.e., $\dist(x,y) = \wei(x \oplus y)$.
We write $x_i$ to denote the $i$-th bit of $x$.

\subsection{Property-Preserving Hash Functions}\label{sec:pph-def}

The following definition of property-preserving hash functions is taken almost verbatim from~\cite{ITCS:BoyLaVVai19}.
In this work, we consider the strongest of several different security notions that were proposed in~\cite{ITCS:BoyLaVVai19}.

\begin{definition}[Property-Preserving Hash]
For a $\secpar\in \NN$ an $\eta$-compressing property-preserving hash function family $\hfam_\secpar = \{h : X \to Y\}$ for a two-input predicate
requires the following three efficiently computable algorithms:
\begin{description}
\item[$\sample(1^\secpar) \to h$] is an efficient randomized algorithm that samples an efficiently computable random hash function from $\hfam$ with security parameter $\secpar$.
\item[$\hash(h,x) \to y$] is an efficient deterministic algorithm that evaluates the hash function $h$ on $x$. 
\item[$\eval(h, y_0,y_1) \to \bin$:] is an efficient deterministic algorithm that on input $h$, and $y_0,y_1 \in Y$ outputs a single bit.
\end{description}

We require that $\hfam$ must be compressing, meaning that $\log \vert Y \vert \leq \eta \log \vert X \vert$ for $0 < \eta < 1$.
\end{definition}

For notational convenience we write $h(x)$ for $\hash(h,x)$.

\begin{definition}[Direct-Access Robustness]
A family of PPH functions $\hfam = \{h : X \to Y\}$ for a two-input predicate $P : X\times X \to \bin$ is a family of direct-access robust PPH functions if, for any PPT adversary $\adv$ it holds that,
\[
\Pr\mleft[\begin{aligned} &h \gets \sample(\secparam);\\&(x_0, x_1) \gets \adv(h)\end{aligned} : \eval(h, h(x_0), h(x_1)) \neq P(x_0,x_1)\mright] \leq \negl,
\]
where the probability is taken over the internal random coins of $\sample$ and $\adv$.
\end{definition}

\subsubsection{Two-Input Predicates.}
We define the following two-input predicates, which will be the main focus of this work.

\begin{definition}[Hamming Predicate]\label{def:ham}
For $x, y \in \bin^n$ and $t >0$, the 
two-input predicate is defined as
\[
\ham^t(x, y) = 
 \begin{cases}
   1 &\text{if } \dist(x,y) \geq t \\
   0 & \text{Otherwise}\\ 
 \end{cases}
\]
\end{definition}

\subsection{Lattices}\label{sec:lattices}

In the following we recall some lattice hardness assumptions and the relationships between them.
We start by revisiting one of the most well-studied computational problems.

\begin{definition}[Shortest Independent Vector Problem]
For an approximation factor of $\approxfactor := \approxfactor(\latticedim) \geq 1$, the $(\latticedim, \approxfactor)$-SIVP is defined as follows: Given a lattice $\lattice \subset \RR^\latticedim$, output $\latticedim$ linearly independent lattice vectors, which have all euclidean length at most $\approxfactor \cdot \mindist$, where $\mindist$ is the minimum possible.
\end{definition}

Starting with the celebrated work of Lenstra, Lenstra, and Lov{\'a}sz~\cite{LLL82}, a long line of research works~\cite{STOC:ADRS15,SOSA:AggSte18,C:ALNS20} has been dedicated to finding fast algorithms for solving the exact and approximate shortest independent vector problem.
All existing algorithms for finding any $\mathsf{poly}(n)$-approximation run in time $2^{\Omega(n)}$ and it is believed that one can not do better asymptotically as is captured in the following assumption.

\begin{assumption}\label{assumptionSIVP}
For large enough $n$, there exists no $2^{o(n)}$-time algorithm for solving the $(n, \approxfactor)$-SIVP with $\approxfactor = \mathsf{poly}(n)$.
\end{assumption}

A different computationally hard problem that has been studied extensively is the short integer solution problem.

\begin{definition}[Short Integer Solution Problem]
For parameters $\latticedim, \samplesize, \domainsize, \solbound_2, \solbound_\infty \in \NN$, the $(\latticedim, \samplesize, \domainsize, \solbound_2, \solbound_\infty)$-SIS problem is defined as follows: Given a uniformly random matrix $A \in \ZZ_q^{\latticedim \times \samplesize}$, find $s \in \ZZ^m$ with $\norm{s}_2 \leq \solbound_2$ and $\norm{s}_\infty \leq \solbound_\infty$, such that $As^\intercal = 0$.
\end{definition}

It was shown by Micciancio and Peikert that the difficulty of solving the SIS problem fast on average is related to the difficulty of solving the SIVP in the worst-case.

\begin{theorem}[Worst-Case to Average-Case Reduction for SIS~\cite{C:MicPei13}]
Let $\latticedim$, $\samplesize := \samplesize(\latticedim)$, and $\solbound_2 \geq \solbound_\infty \geq 1$ be integers.
Let $\domainsize \geq \solbound_2 \cdot \latticedim^\delta$ for some constant $\delta > 0$.
Solving the $(\latticedim, \samplesize, \domainsize, \solbound_2, \solbound_\infty)$-SIS problem on average with non-negligible probability in $n$ is at least as hard as solving the $(\latticedim, \approxfactor)$-SIVP in the worst-case to within $\approxfactor = \max(1, \solbound_2 \cdot \solbound_\infty/\domainsize) \cdot \tildebigO{\solbound_2 \sqrt{\latticedim}}$.
\end{theorem}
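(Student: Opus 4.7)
The plan is to follow the standard Ajtai-style worst-case to average-case reduction framework, with the sharper Gaussian analysis due to Micciancio and Peikert. Given an arbitrary $\latticedim$-dimensional lattice $\lattice$, the goal is to use an average-case SIS solver to produce $\latticedim$ linearly independent lattice vectors of length at most $\approxfactor \cdot \mindist$. As a first step, I would reduce SIVP to the task of repeatedly producing a single short lattice vector: by standard arguments following Micciancio--Regev, one can iteratively extract short vectors and, with only a mild loss in the approximation factor, assemble $\latticedim$ linearly independent ones.

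Next I would construct a uniformly distributed SIS matrix from the input lattice. Starting from an LLL-reduced basis, I would invoke the GPV discrete-Gaussian sampler at a parameter $\sigma$ slightly above the smoothing parameter $\eta_\epsilon(\lattice)$ to obtain $\samplesize$ samples $v_1, \dots, v_\samplesize \in \lattice$. Expressing each $v_i$ in a fixed basis of $\lattice$ and reducing its coordinates modulo $\domainsize$ yields the columns of a matrix $A \in \ZZ_\domainsize^{\latticedim \times \samplesize}$ whose distribution, provided $\sigma$ is above $\eta_\epsilon(\domainsize \lattice)$, is statistically close to uniform. This is the leftover-hash-style step that connects the worst-case lattice to the average-case SIS instance.

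Given this uniform $A$, invoke the average-case SIS solver to obtain $s \in \ZZ^\samplesize$ with $\norm{s}_2 \leq \solbound_2$, $\norm{s}_\infty \leq \solbound_\infty$ and $A s^\intercal = 0 \bmod \domainsize$. By construction the combination $\sum_i s_i v_i$ lies in $\domainsize\lattice$, so $\frac{1}{\domainsize}\sum_i s_i v_i$ is a vector of $\lattice$; a Gaussian tail bound shows its $\ell_2$ norm is roughly $\sigma \cdot \solbound_2 / \domainsize$. Pushing $\sigma$ down toward $\eta_\epsilon(\lattice) \leq \tildebigO{\sqrt{\latticedim}} \cdot \mindist$, and using the $\ell_\infty$ bound to sharpen concentration, yields the claimed approximation factor $\max(1, \solbound_2 \solbound_\infty/\domainsize) \cdot \tildebigO{\solbound_2 \sqrt{\latticedim}}$.

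The two main obstacles are: (i) ensuring that the recovered vector is genuinely shorter than the Gaussian samples themselves and linearly independent of the vectors already produced. The fix, going back to Regev, is to reveal only part of the Gaussian randomness to the solver and rely on the remaining conditional entropy, so the returned lattice vector is fresh and short with overwhelming probability. (ii) Attaining the precise $\solbound_\infty$-dependent approximation factor, which is the technical contribution of Micciancio and Peikert: they refine the purely $\ell_2$-based analysis with an asymmetric (non-spherical) Gaussian argument that exploits the coordinate-wise boundedness of the SIS solution, yielding the tighter relationship between $\domainsize$ and the achieved SIVP approximation and in particular allowing $\domainsize$ to be taken as small as $\solbound_2 \cdot \latticedim^\delta$ rather than the substantially larger modulus required by earlier reductions.
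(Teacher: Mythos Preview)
The paper does not prove this theorem at all; it is stated with a citation to Micciancio and Peikert~\cite{C:MicPei13} and used as a black box to derive the subsequent corollary. There is therefore no proof in the paper against which to compare your proposal.

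That said, your sketch is a faithful high-level outline of the Micciancio--Peikert argument: the Ajtai-style reduction via discrete Gaussian sampling above the smoothing parameter, the uniformity of the resulting SIS matrix, the combination step producing a short lattice vector, the Regev-style conditional-entropy trick for freshness, and the $\ell_\infty$-aware refinement that yields the $\max(1,\solbound_2\solbound_\infty/\domainsize)$ factor. If you were asked to supply a proof here, the right response would be simply to point to~\cite{C:MicPei13} rather than reproduce it.
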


Combining the above result with \autoref{assumptionSIVP}, we get the following corollary.

\begin{corollary}
Let $n \in \Theta(\secpar)$ and $m = \poly$ be integers, let $\solbound_\infty = 2$, and let $\solbound_2 = \sqrt{m + \nu}$ for some constant $\nu$.
Let $q > \solbound_2 \cdot n^\delta$ for some constant $\delta > 0$.
If~\autoref{assumptionSIVP} holds, then for large enough $\secpar$, there exists no PPT adversary that solves the $(\latticedim, \samplesize, \domainsize, \solbound_2, \solbound_\infty)$-SIS problem with non-negligible (in $\secpar$) probability.
\end{corollary}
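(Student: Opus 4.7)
The plan is to combine the Micciancio--Peikert worst-case to average-case reduction (the theorem stated just above) with \autoref{assumptionSIVP}. First I would verify the preconditions of that reduction for the parameters at hand: the hypothesis $\solbound_\infty = 2$ and $\solbound_2 = \sqrt{m+\nu}$ gives $\solbound_2 \geq \solbound_\infty \geq 1$ for $m$ large enough, and $q > \solbound_2 \cdot n^\delta$ is given directly.

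Next I would evaluate the SIVP approximation factor
\[
\approxfactor \;=\; \max\bigl(1,\ \solbound_2 \cdot \solbound_\infty / q\bigr) \cdot \tildebigO{\solbound_2 \sqrt{n}}
\]
that the reduction produces. From $q > \sqrt{m+\nu}\cdot n^\delta$ we obtain
\[
\frac{\solbound_2 \cdot \solbound_\infty}{q} \;=\; \frac{2\sqrt{m+\nu}}{q} \;<\; \frac{2}{n^\delta},
\]
which is at most $1$ for sufficiently large $n$. Hence the maximum collapses to $1$ and $\approxfactor = \tildebigO{\sqrt{(m+\nu)\,n}}$. Using $n = \Theta(\secpar)$ together with $m = \poly(\secpar)$ gives $m = \poly(n)$, so with $\nu$ constant we conclude that $\approxfactor$ is some fixed polynomial in $n$.

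Finally I would close the argument by contraposition. Assume a PPT adversary $\adv$ solves the stated SIS instance with probability non-negligible in $\secpar$. Because $n = \Theta(\secpar)$, ``non-negligible in $\secpar$'' coincides with ``non-negligible in $n$'' and ``PPT in $\secpar$'' coincides with ``PPT in $n$''. The Micciancio--Peikert reduction is itself polynomial-time, so composing it with $\adv$ yields a $\poly(n)$-time algorithm that solves $(n,\approxfactor)$-SIVP in the worst case with $\approxfactor = \poly(n)$. Since $\poly(n) = 2^{o(n)}$, this contradicts \autoref{assumptionSIVP} for large enough $n$, and therefore for large enough $\secpar$.

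The main (and really only) obstacle is parameter bookkeeping: checking that the chosen $q$ is indeed large enough to force the $\max$ in the approximation factor to collapse to $1$ so that $\approxfactor$ stays polynomial in $n$, and then consistently translating between the $\secpar$- and $n$-indexed notions of negligible probability and polynomial running time. Once these are lined up, the statement follows directly from the two earlier results without any additional machinery.
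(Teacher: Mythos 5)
Your proof is correct and takes exactly the route the paper intends: the paper merely states the corollary as an immediate consequence of the Micciancio--Peikert reduction together with \autoref{assumptionSIVP}, and your argument fills in the intended parameter bookkeeping (collapsing the $\max$ to $1$, confirming $\approxfactor = \poly(n)$, and translating between $\secpar$ and $n$) without any deviation.
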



\section{Robust Set Encodings}\label{sec:datastructure}

In this section, we define our notion of robust set encodings.
The encoding transforms a possibly large set into a smaller sketch.
Given two sketches of sets with a small enough symmetric set difference, one should be able to decode the symmetric set difference.
The security of our encodings guarantees that no computationally bounded adversary can find a pair of sets where decoding either returns the incorrect result or fails even though the symmetric set difference between the encoded sets is small.

\begin{definition}[Robust Set Encodings]
A robust set encoding for a universe $U$ is comprised of the following algorithms:

\begin{description}
\item[$\sample(\secparam, t) \to \efunc$] is an efficient randomized algorithm that takes the security parameter $\secpar$ and threshold $t$ as input and returns an efficiently computable set encoding function $\efunc$ sampled from the family $\efam$.
\item[$\encode(\efunc, X) \to y$] is an efficient deterministic algorithm that takes set encoding function $\efunc$ and set $X \subset U$ as input and returns encoding $y$.
\item[$\decode(\efunc, y_0,y_1) \to X' / \bot$] is an efficient deterministic algorithm that takes set encoding function $\efunc$ and two set encodings $y_0,y_1$ as input and returns set $X'$ or $\bot$. 
\end{description}
We denote by $\enclen : \NN \times \NN \to \NN$ the function that describes the length of the encoding for a given security parameter $\secpar$ and threshold $t$.
For any two sets $X_0,X_1$ we use $X'\gets\decdiff(\efunc,X_0,X_1)$ as a shorthand notation for \[X'\gets\decode(\efunc,\encode(\efunc,X_0),\encode(\efunc,X_1)).\]

We say a set encoding is robust, if for any PPT adversary $\adv$ and any threshold $t \in \NN$ it holds that,
\[
\Pr\mleft[
  \begin{aligned} 
    &\efunc \gets \sample(\secparam, t);\\
    &(X_0, X_1) \gets \adv(\efunc,t);\\
    &X'\gets\decdiff(\efunc,X_0,X_1)
  \end{aligned} ~:~ 
  \begin{aligned}
    &X'\not\in\{ X_0 \ssdiff X_1,\bot\}\\ \lor~&(|X_0 \ssdiff X_1|< t\land X'= \bot)
  \end{aligned} \mright] \leq \negl,
\]
where the probability is taken over the random coins of the adversary $\adv$ and $\sample$.
\end{definition}

\subsection{Instantiation}
In this section we construct a set encoding for universe $[m]$ with $m =\poly$ by modifying Invertible Bloom Lookup Tables~\cite{ALL:GooMit11} to achieve security against adaptive malicious inputs. Since we are only encoding polynomially large sets and can leverage the cryptographic hardness of the SIS problem, we can get away with only maintaining a matrix of hash values in our sketch and we do not require the additional counter or value fields that were present in the original construction of Goodrich and Mitzenmacher.
\begin{figure}[t]
\centering
\begin{pcvstack}[center,boxed]
    \begin{pchstack}[center]
    \begin{pcvstack}
    \procedure{$\sample(\secparam, t)$}{
      \tikz[remember picture]{\coordinate (sample-for-head);}\pcforeach i \in [\statpar] \\
      \tikz[remember picture]{\coordinate (sample-for-foot);}\quad r_i \gets \rfam.\sample(\secparam) \\
      R := (r_1, \dots, r_\statpar) \\
      A \gets \ZZ_q^{n \times m} \\
      \pcreturn \efunc:= (R, A)
    }
    \pcvspace
      \procedure{$\encode(\efunc,X)$}{
      H := (0^n)^{\statpar \times 2t} \in (\ZZ^n_q)^{\statpar \times 2t}\\
      \tikz[remember picture]{\coordinate (encode-for1-head);}\pcforeach x \in X \\
      \quad \tikz[remember picture]{\coordinate (encode-for2-head);}\pcforeach i \in [\statpar] \\
      \tikz[remember picture]{\coordinate (encode-for1-foot);} \quad\tikz[remember picture]{\coordinate (encode-for2-foot);} \quad H[i, r_i(x)] := H[i, r_i(x)] + A e_x^\intercal\\
      \pcreturn y:= H
    }
    \pcvspace
    \end{pcvstack}
    \pchspace
    \begin{pcvstack}
  \procedure{$\decode(\efunc, H_0,H_1))$}{
    H := H_0-H_1\\
    X' := \emptyset \\
    \tikz[remember picture]{\coordinate (do);}\pcdo[]\\
    \quad Z := \mleft\{(x, w) \middle|\;
       \begin{pcmbox}\begin{aligned}
        \exists \>(i,j)\in[\statpar]\times[2t]\ldotp\, \\
        \> \land H[i,j]=w\\ \>\land w\in\{A e_{x}^\intercal,-A e_{x}^\intercal\}
      \end{aligned}
      \end{pcmbox}
    \mright\} \\
    \quad X' := X' \cup \{x\mid \exists w\ldotp\,(x,w)\in Z\}\\
    \tikz[remember picture]{\coordinate (while);}\quad H := \removeZ(\efunc,H,Z)\\
    \pcwhile Z\neq\emptyset\\
    \pcif H = (0^n)^{\statpar \times 2t}\\
    \quad \pcreturn X' \\
    \pcelse \\
    \quad \pcreturn \bot
  }
  \pcvspace
    \procedure{$\removeZ(\efunc, H,Z)$}{
    \tikz[remember picture]{\coordinate (remove-for1-head);}\pcforeach (x,w) \in Z\\
    \quad \tikz[remember picture]{\coordinate (remove-for2-head);}\pcforeach i \in [\statpar] \\
    \tikz[remember picture]{\coordinate (remove-for1-foot);}\quad\tikz[remember picture]{\coordinate (remove-for2-foot);} \quad H[i, r_i(x)] := H[i, r_i(x)] - w\\
    \pcreturn H
    }
  \end{pcvstack}
  \end{pchstack}
\end{pcvstack}
  \tikz[remember picture,overlay]{
    \draw ($(sample-for-head)+(1ex,-.3\baselineskip)$) -- ($(sample-for-foot)+(1ex,-.5\baselineskip)$);
    \draw ($(remove-for1-head)+(1ex,-.3\baselineskip)$) -- ($(remove-for1-foot)+(1ex,-.5\baselineskip)$);
    \draw ($(remove-for2-head)+(1ex,-.3\baselineskip)$) -- ($(remove-for2-foot)+(1ex,-.5\baselineskip)$);
    \draw ($(encode-for1-head)+(1ex,-.3\baselineskip)$) -- ($(encode-for1-foot)+(1ex,-.5\baselineskip)$);
    \draw ($(encode-for2-head)+(1ex,-.3\baselineskip)$) -- ($(encode-for2-foot)+(1ex,-.5\baselineskip)$);
    \draw ($(do)+(1ex,-.3\baselineskip)$) -- ($(while)+(1ex,-.5\baselineskip)$); 
  }
\caption{Construction of robust set encodings for universe $\ZZ_m$.}\label{fig:hse-construction}
\end{figure}
Refer to \autoref{fig:hse-construction} for a full description of the construction.
Before we prove that the construction is a robust set encoding we will first prove a few of its properties that will be useful in the following.

The following lemma effectively states that given the difference of two encodings there will always be a least one element that can can be peeled if the symmetric set difference is small enough.

  \begin{lemma}\label{lem:goodcell}
Let $\rfam$ be a family of $t$-wise independent hash functions $r : [m] \to [2t]$  and let $k \geq 2 \log_{3/e} m$.
With probability at least $1-2^{-\Omega(\statpar)}$, it simultaneously holds for all sets
$T \subseteq [m]$ with $0 < \abs{T} \leq t$ that there is at least one
$x \in T$ and one index $i \in [\statpar]$ such that $r_i(x) \neq r_i(y)$ for
all $y \in T \setminus \{x\}$. Here
the probability is taken over the random choice of the $r_i$'s.
\end{lemma}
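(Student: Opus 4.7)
The plan is a union bound over candidate bad sets $T$. For $|T|=1$ the isolation condition is vacuous, so I only need to consider sets of size $2 \le s \le t$; I will call such a $T$ \emph{bad in row $i$} if no element of $T$ is isolated under $r_i$, and simply \emph{bad} if it is bad in every one of the $k$ rows. The conclusion asserts that no bad set exists. First I estimate the per-row probability. Fix $T$ of size $s \le t$. Since $r_i$ is drawn from a $t$-wise independent family and $s \le t$, the values $(r_i(x))_{x \in T}$ are independent and uniform on $[2t]$. If $T$ is bad in row $i$, every value appearing in the multiset $\{r_i(x) : x \in T\}$ must appear at least twice, so the multiset occupies at most $\lfloor s/2 \rfloor$ distinct bins of $[2t]$. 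A union bound over the choice of bin-set, combined with the standard estimate $\binom{n}{\ell} \le (en/\ell)^\ell$, yields
\[
\Pr[T \text{ bad in row } i] \le \binom{2t}{\lfloor s/2 \rfloor}\left(\tfrac{\lfloor s/2 \rfloor}{2t}\right)^s \le \left(\tfrac{es}{4t}\right)^{s/2},
\]
with an analogous (in fact slightly tighter) estimate for odd $s$. Because the $k$ rows are sampled independently, $\Pr[T \text{ bad}] \le (es/(4t))^{ks/2}$.

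Next I union bound over sets. Using $\binom{m}{s} \le m^s$ and rewriting the hypothesis as $m \le (3/e)^{k/2}$, I obtain
\[
\Pr[\exists \text{ bad } T] \le \sum_{s=2}^{t} m^s \left(\tfrac{es}{4t}\right)^{ks/2} \le \sum_{s=2}^{t}\left(\tfrac{3s}{4t}\right)^{ks/2} \le \sum_{s \ge 2}\left(\tfrac{3}{4}\right)^{ks/2},
\]
where the final step uses $s \le t$. This is a geometric series in $s$ with ratio $(3/4)^{k/2} < 1$, and sums to $O((3/4)^k) = 2^{-\Omega(k)}$ for all sufficiently large $k$.

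The principal obstacle is the per-row estimate. A naive bound derived from the weaker observation that each $x \in T$ must collide with some other element of $T$ gives only something like $(s/t)^s$, which cannot absorb the $m^s$ factor from the union bound over sets. The reformulation as ``hashes cover at most $\lfloor s/2 \rfloor$ bins'' leverages the more restrictive structure of the non-isolation event and is precisely what produces the $s/2$ exponent needed for the geometric series to converge exactly when $m \le (3/e)^{k/2}$; matching this exponent to the base $3/e$ is what pins down the specific threshold $k \ge 2\log_{3/e} m$.
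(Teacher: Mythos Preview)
Your proof is correct and follows essentially the same approach as the paper: the key observation that non-isolation forces all hashes of $T$ into at most $\lfloor s/2\rfloor$ bins, the union bound over choices of that bin set combined with $\binom{n}{\ell}\le(en/\ell)^\ell$, then the union bound over $T$ using $\binom{m}{s}\le m^s$ and the splitting $es/(4t)=(e/3)\cdot(3s/(4t))$ so that the hypothesis $m\le(3/e)^{k/2}$ absorbs the $m^s$ factor, leaving a geometric series in $(3/4)^{ks/2}$. The only cosmetic difference is that you bound the per-row probability first and then invoke independence of the $k$ rows, whereas the paper performs the union bound over all $k$ rows simultaneously via a $k$-tuple of bin sets; the resulting estimates are identical.
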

\begin{proof}
Let $E$ denote the event that there is a set $T$ with $0 < |T| \leq t$
such that for all $x \in T$ and all $i \in [\statpar]$, there is a $y \in T
\setminus \{x\}$ with $r_i(x)=r_i(y)$. We show that $\Pr[E]$ is small.
The proof follows from a union bound over all $T \subseteq [m]$ with
$2 \leq \abs{T} \leq t$. So fix one such $T$. Let $E_T$ denote the event that there is no $i \in [\statpar]$ and $x \in T$ such
that $r_i(x) \neq r_i(y)$ for
all $y \in T \setminus \{x\}$. Then by a union bound, we have \[\Pr[E]
\leq \Pr\Bigl[\smashoperator{\bigcup_{T\subseteq [m]}} E_T\Bigr] \leq \smashoperator{\sum_{T\subseteq [m]}} \Pr[E_T].\]

To bound $\Pr[E_T]$, notice that conditioned on $E_T$, the number of distinct hash values
$|\{r_i(x) \mid x \in T\}|$ for the $i$th hash function is at most $|\abs{T}/2$, as every hash value is
\emph{hit} by either $0$ or at least $2$ elements from $T$. Now define
an event $E_{T,S}$ for every $\statpar$-tuple $S=(S_1,\dots,S_\statpar)$ where $S_i$
is a subset of $|T|/2$ values in $[\statpar]$. The event $E_{T,S}$ occurs
if $r_i(x) \in S_i$ for every $x \in T$ and every $i \in [\statpar]$. If $E_T$ happens then at
least one event $E_{T,S}$ happens. Thus \[\Pr[E_T] \leq \Pr\Bigl[\bigcup_{S}
E_{T,S}] \leq \sum_S \Pr[E_{T,S}].\]

To bound $\Pr[E_{T,S}]$, notice
that by $t$-wise independence, the values $r_i(x)$ are independent and
fall in $S_i$ with
probability exactly $|T|/(2 \cdot 2t)$. Since this must happen for
every $i$ and every $x \in T$, we get that $\Pr[E_{T,S}] \leq (|T|/(4t))^{|T|\statpar}$ and $\Pr[E_T]
\leq \binom{2t}{|T|/2}^\statpar (|T|/(4t))^{|T|\statpar}$. A union bound over all $T$ gives
us $\Pr[E] \leq \sum_{j=2}^t \binom{m}{j} \binom{2t}{j/2}^\statpar
(j/(4t))^{j\statpar}$. Using the bound $\binom{n}{\statpar} \leq (en/\statpar)^\statpar$ for all $0
\leq \statpar \leq n$ and the bound $\binom{m}{j} \leq m^j$, we finally conclude:
\begin{eqnarray*}
  \Pr[E] &\leq& \sum_{j=2}^t \binom{m}{j} \binom{2t}{j/2}^\statpar
                (j/(4t))^{j\statpar} \\
  &\leq& \sum_{j=2}^t m^j (4et/j)^{j\statpar/2}
         (j/(4t))^{j\statpar} \\
  &=&\sum_{j=2}^t m^j(e/3)^{j\statpar/2} (3j/(4t))^{j\statpar/2}
\end{eqnarray*}
For $\statpar \geq 2 \log_{3/e} m$ we have
$(e/3)^{\statpar/2} \leq 1/m$. The above is thus bounded by
\begin{eqnarray*}
  \Pr[E] &\leq& \sum_{j=2}^t (3j/(4t))^{j\statpar/2} \\
  &\leq& \sum_{j=2}^t (3/4)^{j\statpar/2}
\end{eqnarray*}
For any $\statpar \geq 2$, the terms in this sum go down by a factor at least
$4/3$ and thus is bounded by $2^{-\Omega(\statpar)}$.\qed
\end{proof}

In the next lemma we show that correctly peeling one layer of elements during decoding leads to a state that is equivalent to never having inserted those elements in the first place.

\begin{lemma}\label{lem:correctdecodinggivesgoodencoding}
For any security parameter $\secpar$, any threshold $t$, any encoding function $\efunc\gets\sample(\secparam,t)$, any pair of subsets $X_0,X_1\subseteq [m]$ and any set 
\[
Z\subseteq\{(x,Ae_x) \mid x\in X_0\setminus X_1\} \cup \{(x,-Ae_x) \mid x\in X_1\setminus X_0\}
\]
and $X:=\{x\mid \exists w\ldotp\,(x,w)\in Z\}$ it holds that
\[\removeZ(\encode(\efunc,X_0)-\encode(\efunc,X_1),Z)\\ = \encode(\efunc,X_0\setminus X)-\encode(\efunc,X_1\setminus X).\]
\end{lemma}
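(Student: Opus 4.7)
The plan is to exploit linearity of both $\encode$ and $\removeZ$ with respect to a natural per-element building block. Concretely, I would first define, for each $x \in [m]$, the single-element encoding $E_x$ as the $\statpar \times 2t$ matrix obtained by inserting only $x$ into an initially zero matrix; equivalently, $E_x[i,j] = A e_x^\intercal$ when $j = r_i(x)$ and $0^n$ otherwise, summing the contributions from distinct $i$ that land in the same cell. A direct inspection of the $\encode$ pseudocode gives $\encode(\efunc, X) = \sum_{x \in X} E_x$, and inspection of $\removeZ$ gives
\[
\removeZ(\efunc, H, Z) \;=\; H \;-\; \sum_{(x, A e_x^\intercal) \in Z} E_x \;+\; \sum_{(x, -A e_x^\intercal) \in Z} E_x,
\]
since peeling $(x, w)$ subtracts $w$ from exactly the cells $(i, r_i(x))$ where encoding would have added $A e_x^\intercal$.

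Next, I would expand the difference from the lemma. The $E_x$ contributions for $x \in X_0 \cap X_1$ cancel pairwise, leaving
\[
\encode(\efunc, X_0) - \encode(\efunc, X_1) \;=\; \sum_{x \in X_0 \setminus X_1} E_x \;-\; \sum_{x \in X_1 \setminus X_0} E_x.
\]
The hypothesis on $Z$ restricts its pairs to $(x, A e_x^\intercal)$ with $x \in X_0 \setminus X_1$, or $(x, -A e_x^\intercal)$ with $x \in X_1 \setminus X_0$. Applying the $\removeZ$ correction to the display above therefore produces
\[
\sum_{x \in (X_0 \setminus X_1) \setminus X} E_x \;-\; \sum_{x \in (X_1 \setminus X_0) \setminus X} E_x.
\]

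Finally, I would compute the right-hand side of the claimed identity the same way. Because $X \subseteq X_0 \ssdiff X_1$, one checks $(X_0 \setminus X) \setminus (X_1 \setminus X) = (X_0 \setminus X_1) \setminus X$, the symmetric statement for the other difference, and $(X_0 \setminus X) \cap (X_1 \setminus X) = X_0 \cap X_1$. Applying the same linearity identity for $\encode$ yields exactly the same sum of $E_x$'s as above, so the two sides agree cell by cell.

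The argument is essentially bookkeeping; the only step requiring a little care is to verify that peeling $(x,w)$ touches precisely the cells where the insertion of $x$ wrote and with the opposite sign, and to invoke the hypothesis on $Z$ to rule out any $x \in X_0 \cap X_1$ being peeled, since such a ``spurious'' peel would break the cancellation on both sides.
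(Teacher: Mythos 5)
Your proposal is correct and uses essentially the same argument as the paper's proof: exploit linearity of $\encode$ and $\removeZ$, use the hypothesis on $Z$ to identify the sign of each peeled entry, and use $X \subseteq X_0 \ssdiff X_1$ to verify the set-theoretic identities that make the two sides agree. The only packaging difference is that you work abstractly with per-element matrices $E_x$ summed over elements, while the paper does the same bookkeeping cell by cell (indexing by $(i,j)$ and $S_{i,j} = \{x \mid r_i(x)=j\}$), which is just a less compressed way of writing the same calculation.
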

\begin{proof} 
  Let $H_b := \encode(\efunc,X_b)$, $H'_b := \encode(\efunc,X'_b\setminus X)$ and $H := \removeZ(\efunc,H_0-H_1,Z)$
  For any $(i,j) \in [\statpar]\times[2t]$, let $S_{i,j} = \{x\in [m] \mid r_i(x)=j\}$.
  Then for each $(i,j) \in [\statpar]\times[2t]$ we have 
  \begin{align}
      \label{eq:defremove}H[i,j] = &H_0[i,j] - H_1[i,j] - \smashoperator{\sum_{x\in X\cap S_{i,j}}}Z(x)\\
       \label{eq:defencode}= &\smashoperator{\sum_{x\in X_0\cap S_{i,j}}} Ae_x^\intercal\; -\; \smashoperator{\sum_{x\in X_1\cap S_{i,j}}} Ae_x^\intercal\; - \;\smashoperator{\sum_{x\in X\cap S_{i,j}}}Z(x)\\
       \label{eq:subsetssd}= &\smashoperator{\sum_{x\in X_0\cap S_{i,j}}} Ae_x^\intercal\; - \;\smashoperator{\sum_{x\in X_1\cap S_{i,j}}} Ae_x^\intercal\; - \;\smashoperator{\sum_{x\in X\cap X_0\cap S_{i,j}}}Z(x)\; - \;\smashoperator{\sum_{x\in X\cap X_1\cap S_{i,j}}}Z(x)\\
       \label{eq:plusminusvalue}= &\smashoperator{\sum_{x\in X_0\cap S_{i,j}}} Ae_x^\intercal\; - \;\smashoperator{\sum_{x\in X_1\cap S_{i,j}}} Ae_x^\intercal\; - \;\smashoperator{\sum_{x\in X\cap X_0\cap S_{i,j}}}Ae_x^\intercal\;\; + \;\smashoperator{\sum_{x\in X\cap X_1\cap S_{i,j}}}Ae_x^\intercal\\
       \label{eq:subsetssd2}= &\smashoperator{\sum_{x\in (X_0\setminus X)\cap S_{i,j}}} Ae_x^\intercal\quad - \quad\smashoperator{\sum_{x\in (X_1\setminus X)\cap S_{i,j}}} Ae_x^\intercal\\
       \label{eq:defencode2}=&H_0'[i,j] - H_1'[i,j],
    \end{align}
    where we denote by $Z(x)$ the unique value $w$ such that $(x,w) \in Z$.
    Equations~\ref{eq:defremove} and \ref{eq:defencode} follow from the definitions of $\removeZ$ and $\encode$ respectively.
    Equations \ref{eq:subsetssd} and \ref{eq:subsetssd2} follow from the fact that $X$ is a subset of the symmetric set difference of $X_0$ and $X_1$.
    \autoref{eq:plusminusvalue} follows from the fact that $w=(-1)^{b} A e_{x}^\intercal$ iff $x\in X_b$.
    Finally, \autoref{eq:defencode2} follows again from the definition of $\encode$. \qed
\end{proof}

The following lemma essentially states that during the decoding process we will never peel an element that is not in the symmetric set difference \emph{and} all elements will be peeled correctly, i.e., the decoding algorithm correctly identifies whether an element is from $X_0$ or from $X_1$.

\begin{lemma}\label{lem:nobadZ}
  For an encoding function $\efunc\gets\sample(\secparam,t)$ and two sets $X_0,X_1$, let $Z_1,Z_2,\dots$ denote the sequence of sets peeled during the execution of $\decode(\efunc,\encode(\efunc,X_0),\encode(\efunc,X_0))$.
  If the $(n,m,\sqrt{m+3},2)$-SIS problem is hard, then for any PPT algorithm $\adv$, it holds that
  \[
    \Pr\mleft[
    \begin{aligned}
      \efunc:=\sample(\secparam,t);\\
      (X_0,X_1) \gets \adv(\efunc)
    \end{aligned}
      \exists c.\, Z_c \not\subseteq 
      \begin{aligned}
        &\{(x,Ae_x) \mid x\in X_0\setminus X_1\}\\ 
        \cup &\{(x,-Ae_x) \mid x\in X_1\setminus X_0\}
      \end{aligned}
    \mright]\leq\negl.
  \]
\end{lemma}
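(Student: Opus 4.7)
The plan is to prove the contrapositive by reduction to SIS: assuming a PPT adversary $\adv$ causes a bad peel with non-negligible probability, we build a PPT algorithm $\bdv$ that solves $(n,m,\sqrt{m+3},2)$-SIS with non-negligible probability. On input a SIS challenge matrix $A \in \ZZ_q^{n\times m}$, the reduction $\bdv$ samples $r_1,\dots,r_\statpar \gets \rfam.\sample(\secparam)$ itself, assembles $\efunc := (R,A)$, runs $(X_0,X_1) \gets \adv(\efunc)$, and then faithfully simulates $\decode(\efunc,\encode(\efunc,X_0),\encode(\efunc,X_1))$. During the simulation, $\bdv$ identifies the first index $c^*$ for which $Z_{c^*}$ contains a ``bad'' pair $(x^*,w)$, i.e., a pair violating the membership condition in the lemma, and outputs a short vector $s$ extracted from that pair as described below.

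The main computation is to express a bad peel as a short kernel vector of $A$. Let $X = \bigcup_{c<c^*}\{x \mid \exists w.\,(x,w)\in Z_c\}$ denote all elements peeled before step $c^*$. Because $c^*$ is the \emph{first} bad step, $X$ satisfies the hypothesis of \autoref{lem:correctdecodinggivesgoodencoding}, so the matrix $H$ entering step $c^*$ equals $\encode(\efunc,X_0\setminus X) - \encode(\efunc,X_1\setminus X)$. Since elements of $X_0\cap X_1$ cancel and $X \subseteq X_0\ssdiff X_1$, the cell $H[i,j]$ where the bad pair is found satisfies
\[
H[i,j] \;=\; \smashoperator{\sum_{x\in (X_0\setminus X_1)\cap S_{i,j}\setminus X}} A e_x^\intercal \;-\; \smashoperator{\sum_{x\in (X_1\setminus X_0)\cap S_{i,j}\setminus X}} A e_x^\intercal,
\]
where $S_{i,j} = \{x \mid r_i(x)=j\}$. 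In case $w = Ae_{x^*}^\intercal$ with $x^*\notin X_0\setminus X_1$ (the other sign is symmetric), set
\[
s \;=\; e_{x^*} \;-\; \smashoperator{\sum_{x\in (X_0\setminus X_1)\cap S_{i,j}\setminus X}} e_x \;+\; \smashoperator{\sum_{x\in (X_1\setminus X_0)\cap S_{i,j}\setminus X}} e_x.
\]
Then $As^\intercal = 0$ by construction.

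The delicate part is the norm analysis and showing $s\neq 0$, which is where the choice of parameters $\solbound_\infty=2$ and $\solbound_2=\sqrt{m+3}$ is exactly calibrated. A short case split on where $x^*$ lies (in $X_0\cap X_1$, in $X_1\setminus X_0$, or outside $X_0\cup X_1$) shows that the coefficient of $e_{x^*}$ in $s$ is always in $\{1,2\}$, so $s\neq 0$. Every other coordinate $y\neq x^*$ lies in at most one of the two symmetric-difference sums and has coefficient in $\{-1,0,1\}$. Hence $\norm{s}_\infty \leq 2$, and
\[
\norm{s}_2^2 \;\leq\; 2^2 + (m-1)\cdot 1^2 \;=\; m+3,
\]
giving $\norm{s}_2 \leq \sqrt{m+3}$. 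Thus $s$ is a valid $(n,m,\sqrt{m+3},2)$-SIS solution.

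Whenever $\adv$ produces a bad peel, $\bdv$ extracts such an $s$. By the assumed hardness of the $(n,m,\sqrt{m+3},2)$-SIS problem, the probability of $\adv$ succeeding must be negligible. The main obstacle in writing this out carefully is precisely the coefficient-by-coefficient case analysis needed to certify that $s$ is nonzero and that the $\ell_2$ norm does not exceed $\sqrt{m+3}$; the rest of the argument is a mechanical assembly of \autoref{lem:correctdecodinggivesgoodencoding} together with a first-bad-step hybrid. \qed
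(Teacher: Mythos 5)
Your reduction to SIS is the same one the paper gives: simulate decoding honestly on the SIS matrix, stop at the first bad peel $Z_{c^*}$, use \autoref{lem:correctdecodinggivesgoodencoding} to rewrite the offending cell $H[i,j]$ as a signed sum of $Ae_y^\intercal$ over the as-yet-unpeeled symmetric-difference elements hashing to $(i,j)$, and output the difference between that signed unit-vector sum and $\pm e_{x^*}$ as the short kernel vector. Your $s$ is (up to sign and up to explicitly cancelling the $X_0\cap X_1$ terms, which in the paper's presentation simply appear with coefficient $+1-1=0$) identical to the paper's $s := \sum_{y\in X'_0\cap S_{i,j}}e_y-\sum_{y\in X'_1\cap S_{i,j}}e_y-(-1)^b e_x$, and the norm bound $m+3$ and $\ell_\infty$ bound $2$ are argued the same way. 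The one place you are actually \emph{more} careful than the paper: you spell out the case split showing that the coefficient on $e_{x^*}$ is always in $\{1,2\}$, hence $s\neq 0$, whereas the paper's proof only asserts the norm bounds and leaves nonzeroness implicit. That detail does matter (the zero vector satisfies $As^\intercal=0$ and the norm constraints trivially), so your version is the tighter write-up of the same argument. No gaps.
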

\begin{proof}
  Let $\adv$ be an arbitrary PPT algorithm with 
  \[
    \Pr\mleft[
    \begin{aligned}
      \efunc:=\sample(\secparam,t);\\
      (X_0,X_1) \gets \adv(\efunc)
    \end{aligned}
      \exists c.\, Z_c \not\subseteq 
      \begin{aligned}
        &\{(x,Ae_x) \mid x\in X_0\setminus X_1\}\\ 
        \cup &\{(x,-Ae_x) \mid x\in X_1\setminus X_0\}
      \end{aligned}
    \mright]=\epsilon(\secpar).
  \]
  We construct an algorithm $\bdv$ that solves $(n,m,\sqrt{m+3},2)$-SIS as follows.
  $\bdv$ receives as input a random matrix $A \in \ZZ_q^{n\times m}$, samples $r_i \gets \rfam$ for $i\in [\statpar]$ and invokes $\adv$ on $f=(A,(r_1,\dots,r_\statpar))$.
  Once $\adv$ outputs $X_0,X_1$, $\bdv$ runs $H_0 := \encode(\efunc,X_0)$ and $H_1 := \encode(\efunc,X_1)$ and then starts to execute $\decode(\efunc,H_0,H_1)$.
  Let $Z_c$ denote the set $Z$ in the $c$-th iteration of the main loop of $\decode$.
  In each iteration, if \[Z_c \not\subseteq \{(x,Ae_x^\intercal) \mid x\in X_0\setminus X_1\} \cup \{(x,-Ae_x^\intercal) \mid x\in X_1\setminus X_0\},\]
  then $\bdv$ stops the decoding process and proceeds as follows.
  
  Let $S_{i,j} = \{x\in [m] \mid r_i(x)=j\}$ and $X'_b := X_b\setminus (Z_1\cup\dots\cup Z_{c-1})$.
  By definition of $Z$, there must exists at least one element $(x,w)\in Z_c$, such that $H[i,j]=(-1)^{b}A e_{x}^\intercal$ and $x\not\in X_b\setminus X_{1-b}$ for some cell $(i,j)$ and some bit $b$.
  $\bdv$ identifies one such cell by exhaustive search and outputs the vector
  \[
    s := \smashoperator[r]{\sum_{y\in X'_0\cap S_{i,j}}}e_y\;-\;\smashoperator{\sum_{y\in X'_1\cap S_{i,j}}}e_y - (-1)^be_x.
  \]
  
  If the decoding procedure terminates without such a $Z_c$ occurring, $\bdv$ outputs $\bot$.
  
  To analyze the success probability of $\bdv$, consider that by \autoref{lem:correctdecodinggivesgoodencoding} and since $Z_c$ is the \emph{first} set in which an element as specified above exists, we have that $H = \encode(\efunc,X_0')-\encode(\efunc,X_1')$, i.e.
  \[
    (-1)^bAe_x^\intercal = H[i,j] = \sum_{y\in X'_0\cap S_{i,j}}Ae_y^\intercal\;-\;\smashoperator{\sum_{y\in X'_1\cap S_{i,j}}}Ae_y^\intercal
  \]
  Thus, whenever $\bdv$ outputs a vector $s$, it holds that $As^\intercal=0$.
  Furthermore, this vector consists of the sum of at most $m$ unique canonical unit vectors and one additional canonical unit vector. This inplies that $\norm{s}_2 \leq \sqrt{m+3}$ and $\norm{s}_\infty \leq 2$.
  We can conclude that $\bdv$ solves $(n,m,\sqrt{m+3},2)$-SIS, with probability $\epsilon(\secpar)$.
  Since $(n,m,\sqrt{m+3},2)$-SIS is assumed to be hard, $\epsilon(\secpar)$ must be negligible. \qed
\end{proof}

The following lemma states that with overwhelming probability the decoding process will output either $\bot$ or a subset of the symmetric set difference, even for maliciously chosen sets $X_0,X_1$.

\begin{lemma}\label{lem:noforeignelements}
If the $(n,m,\sqrt{m+3},2)$-SIS problem is hard, then for any PPT adversary $\adv$ it holds that
\[
  \Pr\mleft[
    \begin{aligned}
      \efunc:=\sample(\secparam,t);\\
      (X_0,X_1) \gets \adv(\efunc);\\
      X' := \decdiff(\efunc, X_0, X_1)
    \end{aligned}~:~ 
      X' \neq \bot
      \land\, X' \not\subseteq X_0 \ssdiff X_1
    \mright]
    \leq\negl
\]
\end{lemma}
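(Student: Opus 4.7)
The plan is to reduce this statement almost directly to \autoref{lem:nobadZ}. That lemma controls the individual peeled sets $Z_1, Z_2, \ldots$, while the current statement is about the accumulated output $X'$ of the full decoding procedure, so the task is essentially to pass from the per-round guarantee to the final-output guarantee.

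First I would unpack what it means for the bad event to occur. The output $X'$ is constructed as $X' = \bigcup_c \{x \mid \exists w.\, (x,w) \in Z_c\}$ where $Z_1, Z_2, \ldots$ is the sequence of peeled sets produced by $\decode(\efunc, \encode(\efunc, X_0), \encode(\efunc, X_1))$. Thus if every $Z_c$ satisfies
\[
Z_c \subseteq \{(x, Ae_x^\intercal) \mid x \in X_0 \setminus X_1\} \cup \{(x, -Ae_x^\intercal) \mid x \in X_1 \setminus X_0\},
\]
then every peeled element $x$ lies in $X_0 \ssdiff X_1$, and therefore $X' \subseteq X_0 \ssdiff X_1$. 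Contrapositively, the event $X' \not\subseteq X_0 \ssdiff X_1$ implies that at least one $Z_c$ contains a pair $(x, w)$ with $x \notin X_0 \ssdiff X_1$, which is precisely the bad event bounded in \autoref{lem:nobadZ}.

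Given this, the proof proper is short: let $\adv$ be an arbitrary PPT adversary for the current game, and consider the experiment that samples $\efunc$, runs $\adv(\efunc)$ to get $(X_0, X_1)$, and executes $\decdiff(\efunc, X_0, X_1)$ tracing the sequence $Z_1, Z_2, \ldots$. By the observation above, the event $\{X' \neq \bot \wedge X' \not\subseteq X_0 \ssdiff X_1\}$ is contained in the event that some $Z_c$ falls outside the good set of the lemma. \autoref{lem:nobadZ} bounds the latter by a negligible function under the hardness of $(n, m, \sqrt{m+3}, 2)$-SIS, so the same bound applies here.

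I do not anticipate a genuine obstacle: the only thing to be careful about is that the $Z_c$'s here are exactly the same random variables as in \autoref{lem:nobadZ} (same $\sample$, same $\decdiff$), so the inclusion of events is literal and no additional coupling or reduction is needed. In particular, one does not even have to invoke \autoref{lem:correctdecodinggivesgoodencoding} for this direction, since it suffices to track which $x$'s are added to $X'$, not the state of the residual matrix. The qualifier $X' \neq \bot$ in the statement is irrelevant for the bound, because the containment argument already gives $X' \subseteq X_0 \ssdiff X_1$ whenever all $Z_c$ are good, regardless of whether decoding ultimately returns $\bot$ or the accumulated set.
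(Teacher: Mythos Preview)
Your proposal is correct and matches the paper's proof essentially line for line: the paper also observes that $X' = \{x \mid \exists w.\,(x,w)\in Z_1\cup\dots\}$, so any $x \in X' \setminus (X_0 \ssdiff X_1)$ forces some $Z_c$ to violate the containment bounded by \autoref{lem:nobadZ}, and the result follows immediately. Your additional remarks about not needing \autoref{lem:correctdecodinggivesgoodencoding} and about the irrelevance of the $X' \neq \bot$ qualifier are accurate and simply make explicit what the paper leaves implicit.
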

\begin{proof}
Let $Z_1,Z_2,\dots$ denote the sequence of sets peeled during the execution of \[\decode(\efunc,\encode(\efunc,X_0),\encode(\efunc,X_0)).\]
If an algorithm outputs $X_0,X_1$, such that $X' \not\subseteq X_0 \ssdiff X_1$, there must exist an $x\in X'$ such that \[x'\not\in X_0 \ssdiff X_1 = (X_0\setminus X_1) \cup (X_1\setminus X_0).\]
Since $X' := \{x\mid \exists w.\, (x,w) \in Z_1 \cup\dots \}$, this can only happen with negligible probability by \autoref{lem:nobadZ}. \qed
\end{proof}

The following lemma states that with overwhelming probability the decoding process will never output a \emph{strict} subset of the symmetric set difference, even for maliciously chosen sets $X_0,X_1$.

\begin{lemma}\label{lem:nostrictsubset}
If the $(n,m,\sqrt{m+3},2)$-SIS problem is hard, then for any PPT adversary $\adv$ it holds that 
\[
  \Pr\mleft[
    \begin{aligned}
      \efunc:=\sample(\secparam,t);\\
      (X_0,X_1) \gets \adv(\efunc);\\
      X' := \decdiff(\efunc, X_0, X_1)
    \end{aligned}~:~ 
      X' \subsetneq X_0 \ssdiff X_1
    \mright]
    \leq\negl
\]
\end{lemma}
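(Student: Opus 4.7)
The plan is to reduce to $(n,m,\sqrt{m+3},2)$-SIS by extracting a short nonzero kernel vector of $A$ from any execution in which $\decode$ returns a strict subset $X' \subsetneq X_0 \ssdiff X_1$. The two previous lemmas do most of the work: \autoref{lem:nobadZ} rules out peeling ``fake'' elements, and \autoref{lem:correctdecodinggivesgoodencoding} tells us what the residual sketch looks like after correct peels.

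First I would condition on the overwhelming-probability event from \autoref{lem:nobadZ} that every peeled set $Z_c$ is contained in $\{(x,Ae_x^\intercal)\mid x\in X_0\setminus X_1\}\cup\{(x,-Ae_x^\intercal)\mid x\in X_1\setminus X_0\}$. Under this event, an easy induction on the iteration count, applying \autoref{lem:correctdecodinggivesgoodencoding} once per peeling step, shows that after the loop terminates the matrix satisfies $H = \encode(\efunc, X_0\setminus X') - \encode(\efunc, X_1\setminus X')$. Since $\decode$ returned $X'$ rather than $\bot$, we additionally have $H = (0^n)^{\statpar\times 2t}$.

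Now I would exploit the strict containment to pick $y^* \in (X_0\ssdiff X_1)\setminus X'$, and without loss of generality assume $y^*\in X_0\setminus X_1$. Fix $j := r_1(y^*)$ and let $S_{i,j} := \{x\in[m]\mid r_i(x)=j\}$. The identity $H[1,j]=0$ reads
\[
    \smashoperator{\sum_{y\in (X_0\setminus X')\cap S_{1,j}}} Ae_y^\intercal \;-\; \smashoperator{\sum_{y\in (X_1\setminus X')\cap S_{1,j}}} Ae_y^\intercal \;=\; 0,
\]
so the vector
\[
    s \;:=\; \smashoperator{\sum_{y\in (X_0\setminus X')\cap S_{1,j}}} e_y \;-\; \smashoperator{\sum_{y\in (X_1\setminus X')\cap S_{1,j}}} e_y \;\in\;\{-1,0,1\}^m
\]
satisfies $As^\intercal = 0$ with $\norm{s}_\infty\leq 1\leq 2$ and $\norm{s}_2\leq\sqrt{m}\leq\sqrt{m+3}$. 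Because $y^*\in X_0\setminus X'$ and $y^*\notin X_1$, the $y^*$-th entry of $s$ is $+1$, so $s$ is nonzero, giving a valid SIS solution.

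The reduction $\bdv$ then mirrors the one used in \autoref{lem:nobadZ}: on input a uniform $A$, sample the $r_i$'s, build $\efunc$, invoke $\adv(\efunc)$ to get $(X_0,X_1)$, simulate $\decdiff$, and whenever the resulting $X'$ is a strict subset of $X_0\ssdiff X_1$, search by brute force for a suitable $y^*$ and row $i$ and output the corresponding $s$. Whenever the event of \autoref{lem:nobadZ} holds and $\adv$'s bad event occurs, $\bdv$ wins the SIS game, so by hypothesis the combined probability is negligible; a union bound with \autoref{lem:nobadZ} then yields the claim. The only real subtlety is the inductive step using \autoref{lem:correctdecodinggivesgoodencoding} across multiple peels -- making sure the ``good $Z_c$'' hypothesis holds simultaneously for every iteration -- and this is exactly the statement of \autoref{lem:nobadZ}, which already quantifies over ``$\exists c$''.
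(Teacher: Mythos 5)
Your proof is correct and follows essentially the same reduction as the paper: condition on the event of \autoref{lem:nobadZ}, apply \autoref{lem:correctdecodinggivesgoodencoding} iteratively to conclude that the terminal (all-zero) matrix equals $\encode(\efunc, X_0\setminus X') - \encode(\efunc, X_1\setminus X')$, then read off a short kernel vector of $A$ from any cell hit by an element of $(X_0\ssdiff X_1)\setminus X'$. You are in fact slightly more careful than the paper in one spot: by tracking the $y^*$-coordinate you explicitly justify that $s\neq 0$, a point the paper leaves implicit when it asserts $\norm{s}_\infty = 1$.
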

\begin{proof}
Let $\adv$ be a PPT an adversary for the above experiment.
We construct an adversary $\bdv$ against $(n, m, \sqrt{m+3}, 2)$-SIS as follows.
$\bdv$ is given matrix $A$, samples $r_i \gets \rfam$ for $i\in [\statpar]$ and invokes $\adv$ on $\efunc=(A,(r_1,\dots,r_\statpar))$.
Adversary $\adv$ returns $X_0$ and $X_1$ and $\bdv$ computes $X' := \decdiff(\efunc, X_0, X_1)$.
If $X' \subsetneq X_0 \ssdiff X_1$, then $\bdv$ computes $X'_b = X_b \setminus X'$ for $b\in\{0,1\}$ and finds an index $i, j$ such that there exists an $x \in X'_0 \ssdiff X'_1$ with $r_i(x) = j$.
$\bdv$ returns
\[
  s := \smashoperator[r]{\sum_{y\in X'_0\cap S_{i,j}}}e_y\;-\;\smashoperator{\sum_{y\in X'_1\cap S_{i,j}}}e_y.
\]
Since every canonical unit vector appears at most once in the sum above, it follows that $\norm{s}_2 \leq \sqrt{m}$ and $\norm{s}_\infty = 1$.

To analyze the probability that $As^\intercal=0$ we consider the following.
Let $H'$ be the value of the matrix $H$ when the decoding procedure terminates.
By \autoref{lem:nobadZ} and \autoref{lem:correctdecodinggivesgoodencoding} it holds with overwhelming probability that $H' = H'_0 - H'_1 = \encode(\efunc,X'_0)-\encode(\efunc,X'_0)$.
However, since the decoding terminates successfully, it must also hold that $H' = (0^n)^{\statpar \times 2t}$.
It follows that for all $i, j$, we have $H'_0[i,j] - H'_1[i,j] = 0$ and therefore $As = 0$ with overwhelming probability.
Since $(n,m,\sqrt{m+3},2)$-SIS is assumed to be hard the lemma follows. \qed
\end{proof}

By combining \autoref{lem:noforeignelements} and \autoref{lem:nostrictsubset} we obtain the following corollary stating that with overwhelming probability the decoding process will output \emph{either} the correct symmetric set difference \emph{or} the error symbol $\bot$.

\begin{corollary}\label{corr:exactdiff}
  If the $(n,m,\sqrt{m+3},2)$-SIS problem is hard, then for any PPT adversary $\adv$ it holds that
  \[
  \Pr\mleft[
    \begin{aligned}
      \efunc:=\sample(\secparam,t);\\
      (X_0,X_1) \gets \adv(\efunc);\\
      X' := \decdiff(\efunc, X_0, X_1)
    \end{aligned}~:~ 
      X'\not\in \{X_0 \ssdiff X_1,\bot\}
    \mright]
    \leq\negl
\]
\end{corollary}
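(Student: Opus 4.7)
The plan is to derive the corollary as an essentially immediate union-bound consequence of Lemma~\ref{lem:noforeignelements} and Lemma~\ref{lem:nostrictsubset}, which together already cover the two ways in which $X'$ can differ from both $X_0 \ssdiff X_1$ and $\bot$.

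More concretely, I would begin by unpacking the bad event. The event $X' \notin \{X_0 \ssdiff X_1, \bot\}$ is the conjunction $X' \neq \bot$ and $X' \neq X_0 \ssdiff X_1$. Splitting on whether $X'$ is contained in $X_0 \ssdiff X_1$ or not gives precisely two subcases: either (i) $X' \neq \bot$ and $X' \not\subseteq X_0 \ssdiff X_1$, meaning decoding returned some element lying outside the symmetric set difference; or (ii) $X' \subseteq X_0 \ssdiff X_1$ but $X' \neq X_0 \ssdiff X_1$, which is exactly $X' \subsetneq X_0 \ssdiff X_1$. These two subcases are disjoint and together exhaust the bad event.

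The next step is a direct union bound:
\[
\Pr[X' \notin \{X_0 \ssdiff X_1, \bot\}] \leq \Pr[X' \neq \bot \land X' \not\subseteq X_0 \ssdiff X_1] + \Pr[X' \subsetneq X_0 \ssdiff X_1].
\]
The first term on the right is negligible by Lemma~\ref{lem:noforeignelements} under the assumed hardness of $(n,m,\sqrt{m+3},2)$-SIS, and the second is negligible by Lemma~\ref{lem:nostrictsubset} under the same assumption. Summing two negligible functions yields a negligible function, which gives the bound.

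There is no real obstacle here; all the cryptographic and combinatorial work was done in the preceding lemmas (together with Lemma~\ref{lem:goodcell} and Lemma~\ref{lem:correctdecodinggivesgoodencoding}). The only subtlety worth stating explicitly is that the two events invoked above are defined with respect to the \emph{same} probability space (same sampling of $\efunc$ and same coins of $\adv$), so the union bound applies verbatim without any reindexing of experiments.
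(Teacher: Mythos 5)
Your decomposition of the bad event into the mutually exclusive subcases $(X'\neq\bot \land X'\not\subseteq X_0\ssdiff X_1)$ and $(X'\subsetneq X_0\ssdiff X_1)$, followed by a union bound over Lemma~\ref{lem:noforeignelements} and Lemma~\ref{lem:nostrictsubset}, is exactly the argument the paper intends when it states that the corollary is obtained by combining those two lemmas. The proposal is correct and matches the paper's approach.
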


The following lemma states that with overwhelming probability the decoding process will not output $\bot$ if the symmetric set difference is small.

\begin{lemma}\label{lem:worksforsmall}
  If the $(n,m,\sqrt{m+3},2)$-SIS problem is hard, then for any PPT adversary $\adv$ it holds that
  \[
\Pr\mleft[
  \begin{aligned} 
    &\efunc \gets \sample(\secparam, t);\\
    &(X_0, X_1) \gets \adv(\efunc,t);\\
    &X'\gets\decdiff(\efunc,X_0,X_1)
  \end{aligned} ~:~ 
    |X_0 \ssdiff X_1|< t\land X'= \bot\mright]\leq\negl
    \]
\end{lemma}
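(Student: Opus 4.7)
The plan is to condition on two high-probability events and show that, under both, the main decoding loop cannot exit without $H$ being the zero matrix whenever $|X_0 \ssdiff X_1| < t$. The first event, call it $G$, is that the hash functions $r_1,\dots,r_\statpar$ sampled by $\sample$ are \emph{good} in the sense of \autoref{lem:goodcell}: for every non-empty $T \subseteq [m]$ with $|T| \leq t$ there exist $x \in T$ and $i \in [\statpar]$ with $r_i(x) \neq r_i(y)$ for all $y \in T \setminus \{x\}$. This fails with probability $2^{-\Omega(\statpar)}$. The second event, $V$, is that every peeled set $Z_c$ in the execution of $\decode$ is a subset of $\{(x,Ae_x^\intercal) \mid x \in X_0 \setminus X_1\} \cup \{(x,-Ae_x^\intercal) \mid x \in X_1 \setminus X_0\}$, which holds with overwhelming probability by \autoref{lem:nobadZ} assuming SIS hardness.

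Under $V$, I will apply \autoref{lem:correctdecodinggivesgoodencoding} inductively on the iteration count to establish the invariant that, after $c$ iterations, letting $X^{(c)} := \{x \mid \exists w.\,(x,w) \in Z_1\cup\dots\cup Z_c\}$, we have $X^{(c)} \subseteq X_0 \ssdiff X_1$ and
\[
H = \encode(\efunc, X_0 \setminus X^{(c)}) - \encode(\efunc, X_1 \setminus X^{(c)}).
\]
Consequently, the effective remaining symmetric set difference $T_c := (X_0 \ssdiff X_1) \setminus X^{(c)}$ satisfies $|T_c| \leq |X_0 \ssdiff X_1| < t$, so event $G$ is applicable to $T_c$ at every step.

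The central step is then to argue that whenever $T_c \neq \emptyset$, event $G$ forces $Z_{c+1} \neq \emptyset$. Applying $G$ to $T_c$ yields some $x \in T_c$ and row $i$ with $r_i(x) \neq r_i(y)$ for every other $y \in T_c$. Elements of $(X_0 \cap X_1) \setminus X^{(c)}$ that happen to also hash to column $r_i(x)$ contribute equal and opposite terms to the two encodings and therefore cancel in $H[i, r_i(x)]$, so the cell contains exactly $+Ae_x^\intercal$ or $-Ae_x^\intercal$ depending on whether $x \in X_0$ or $x \in X_1$. Hence $(x, H[i, r_i(x)]) \in Z_{c+1}$, and the loop keeps running. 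The loop can therefore only exit once $T_c = \emptyset$, at which point $X_0 \setminus X^{(c)} = X_1 \setminus X^{(c)}$, so $H = 0$ and $\decode$ returns $X^{(c)} \neq \bot$.

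The one step needing care — though not really an obstacle given the earlier lemmas — is the cancellation argument for elements of $X_0 \cap X_1$ that collide into the distinguished cell; it rests purely on the additive structure of $\encode$ together with the subtraction in $\decode$. A union bound over the failure probabilities of $G$ and $V$ then yields the claimed negligible overall bound.
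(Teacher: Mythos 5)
Your proposal is correct and follows essentially the same route as the paper's proof: condition on the good event from \autoref{lem:goodcell}, invoke \autoref{lem:nobadZ} and \autoref{lem:correctdecodinggivesgoodencoding} to maintain the invariant $H = \encode(\efunc, X_0\setminus X^{(c)}) - \encode(\efunc, X_1\setminus X^{(c)})$ at each iteration, and then use \autoref{lem:goodcell} on the shrinking remaining difference $T_c$ to force $Z_{c+1}\neq\emptyset$ until $T_c=\emptyset$, with a union bound at the end. The one place you go beyond the paper's terse write-up is the explicit cancellation argument for elements of $X_0 \cap X_1$ that collide into the distinguished cell; the paper jumps directly from \autoref{lem:goodcell} to $Z_c \neq \emptyset$ without spelling out why no common element can pollute that cell, so your extra sentence actually tightens that step rather than deviating from it.
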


\begin{proof}
  Let $\adv$ be an arbitrary PPT algorithm.
  By \autoref{lem:nobadZ} and \autoref{lem:correctdecodinggivesgoodencoding} it holds that in each iteration $c$ we have $H = H_{c,0}-H_{c,1}$, where $H_{c,b} = \encode(\efunc,X_{c,0},X_{c,1})$ and $X_{c,b}=X_b\setminus\{x\mid\exists w.\; (x,w)\in Z_1\cup\dots\cup Z_{c-1}\}$.
Since it must hold that $|X_0 \ssdiff X_1|< t$ it in particular holds that $|X_{c,0} \ssdiff X_{c,1}| < t$ in each iteration.
By \autoref{lem:goodcell}, in each iteration where $X_{c,1} \ssdiff X_{c,2} \neq \emptyset$ it holds that $Z_c \neq \emptyset$ with overwhelming probability.
Therefore, the decoding process terminates after at most $t$ steps, with $X' = X_0 \ssdiff X_1$.
Since each peeling step was correct with overwhelming probability it must hold that $H=(0^n)^{\statpar\times 2t}$.\qed
\end{proof}

Given the above lemmas, we can now easily prove the following theorem.

\begin{theorem}
  Let $\rfam$ be a family of $t$-wise independent hash functions $r : [m] \to [2t]$  and let $\statpar \geq \max\{\secpar,2 \log_{3/e} m\}$. 
  Then the construction in \autoref{fig:hse-construction} is a robust set encoding for universe $[m]$ if the $(n=n(\secpar),m,\sqrt{m+3},3)$-SIS problem is hard.
\end{theorem}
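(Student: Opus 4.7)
The plan is to observe that the definition of robust set encoding requires bounding the probability of a disjunction of two bad events, and that each disjunct has already been controlled by the preceding lemmas. Concretely, the bad event is
\[
\bigl(X' \notin \{X_0 \ssdiff X_1,\bot\}\bigr) \;\lor\; \bigl(|X_0 \ssdiff X_1|<t \land X'=\bot\bigr),
\]
so a union bound over these two clauses reduces the theorem to two statements that have already been established.

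First I would invoke \autoref{corr:exactdiff} (itself the combination of \autoref{lem:noforeignelements} and \autoref{lem:nostrictsubset}), which under the hardness of $(n,m,\sqrt{m+3},2)$-SIS gives that the first clause occurs only with negligible probability for any PPT adversary. Note that the theorem assumes hardness of $(n,m,\sqrt{m+3},3)$-SIS which trivially implies the $\solbound_\infty=2$ variant needed by the lemmas (any solution with $\ell_\infty$-norm $\leq 2$ is also a solution with $\ell_\infty$-norm $\leq 3$).

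Next I would invoke \autoref{lem:worksforsmall}, which under the same hardness assumption shows that the second clause occurs with only negligible probability. This lemma in turn relies on \autoref{lem:goodcell} to guarantee that as long as the residual symmetric set difference is nonempty and of size at most $t$, a peelable cell exists; the hypothesis $\statpar \geq 2\log_{3/e} m$ in the theorem is exactly what is needed to apply \autoref{lem:goodcell}, and $\statpar \geq \secpar$ ensures the $2^{-\Omega(\statpar)}$ failure probability there is negligible in $\secpar$.

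There is no real obstacle beyond bookkeeping: the structural work was done in the earlier lemmas. The only small point worth being careful about is that \autoref{lem:goodcell} is a statement about the random choice of the $r_i$'s being good for \emph{all} sets of size at most $t$ simultaneously, which is what allows it to be combined with an adversary who sees $\efunc$ before choosing $X_0, X_1$; conditioned on that good event (which holds except with probability $2^{-\Omega(\statpar)} \leq \negl$), together with the SIS-based lemmas, the decoding process behaves correctly. The final bound then follows from a union bound over the constant number of negligible events.
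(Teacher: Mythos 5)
Your proposal is correct and matches the paper's own proof: a union bound splits the bad event into the two disjuncts, with \autoref{corr:exactdiff} covering the first and \autoref{lem:worksforsmall} covering the second. Your observations about why $\statpar \geq 2\log_{3/e} m$ and $\statpar \geq \secpar$ are needed, and that the $\solbound_\infty = 3$ hypothesis implies the $\solbound_\infty = 2$ hardness actually used by the lemmas, are correct clarifications that the paper leaves implicit.
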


\begin{proof}
Let $\adv$ be an arbitrary PPT algorithm, using \autoref{corr:exactdiff}, \autoref{lem:worksforsmall} and a simple union bound we can conclude that
\begin{align*}
&\Pr\mleft[
  \begin{aligned} 
    &\efunc \gets \sample(\secparam, t);\\
    &(X_0, X_1) \gets \adv(\efunc,t);\\
    &X'\gets\decdiff(\efunc,X_0,X_1)
  \end{aligned} ~:~ 
  \begin{aligned}
    &X'\not\in\{ X_0 \ssdiff X_1,\bot\}\\ \lor~&(|X_0 \ssdiff X_1|< t\land X'= \bot)
  \end{aligned} \mright]\\
  \leq~&\Pr\mleft[
  \begin{aligned} 
    &\efunc \gets \sample(\secparam, t);\\
    &(X_0, X_1) \gets \adv(\efunc,t);\\
    &X'\gets\decdiff(\efunc,X_0,X_1)
  \end{aligned} ~:~ 
    X'\not\in\{ X_0 \ssdiff X_1,\bot\}
  \mright]\\
  &+\Pr\mleft[
  \begin{aligned} 
    &\efunc \gets \sample(\secparam, t);\\
    &(X_0, X_1) \gets \adv(\efunc,t);\\
    &X'\gets\decdiff(\efunc,X_0,X_1)
  \end{aligned} ~:~ 
    |X_0 \ssdiff X_1|< t\land X'= \bot\mright]
   \\
   \leq~& \negl.
\end{align*}
\end{proof}

\begin{remark}
  Instantiated as specified, the construction has keys that consist of $\statpar$ many $t$-wise independent hash functions and a matrix $A\in\ZZ_q^{m\times n}$, leading to a key length of $\statpar t\cdot\log m + mn\cdot\log q$.
  Note that the entire key can be represented by a public uniformly random $\statpar t\cdot\log m + mn\cdot\log q$ bit string.
  Assuming the existence of a random oracle, this string can be replaced by a short $\secpar$ bit seed.
\end{remark}


\section{Construction}\label{sec:construction}
In this section we construct property-preserving hash functions for the exact hamming distance predicate based on robust set encodings.

\subsection{PPH for the Hamming Distance Predicate}\label{sec:pphssd}
\begin{figure}[t]
\begin{pcvstack}[center,boxed]
  \begin{pchstack}[center]
    \procedure{$\sample(\secparam)$}{
      \efunc\gets\efam.\sample(\secparam, 2t)\\
      \pcreturn h:= \efunc
    }
    \pchspace
    \procedure{$\hash(h,x)$}{
      X := \{2i-x_i\mid i\in[\inlen]\}\\
      y:=\efam.\encode(h,X)\\
      \pcreturn y
    }
    \pchspace
  \procedure{$\eval(h,y_0,y_1)$}{
    X' := \efam.\decode(h,y_0,y_1)\\
    \pcif X'=\bot \pcor \abs{X'} \geq 2t\\
    \quad \pcreturn 1\\
    \pcelse\\
    \quad \pcreturn 0
  }
  \end{pchstack}
\end{pcvstack}
\caption{A family of direct-access robust PPHs for the predicate $\ham^t$ over the domain $\bin^\inlen$ for any $\inlen\in\NN$.}\label{fig:exact-ham}
\end{figure}

\begin{theorem}\label{thm:pph-ssd}
Let $\inlen = \poly$ and $t\leq \inlen$.
Let $\efam$ be a robust set encoding for universe $[2\inlen]$ with encoding length $\enclen$.
Then, the construction in Figure~\ref{fig:exact-ham} is a $\enclen(\secpar,2t)/\inlen$-compressing direct-access robust property-preserving hash function family for the two-input predicate $\ham^t$ and domain $\bin^\inlen$.
\end{theorem}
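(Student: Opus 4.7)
The plan is to establish the theorem by verifying the two requirements for a direct-access robust PPH: the compression ratio and the robustness condition. Compression is immediate from the parameters of $\efam$: the hash value is the output of $\efam.\encode$, so $\log |Y| \leq \enclen(\secpar, 2t)$, while $\log |X| = \inlen$, giving compression ratio $\enclen(\secpar, 2t)/\inlen$ as claimed.

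The heart of the proof is the robustness reduction, which rests on a single combinatorial identity. For any $x \in \bin^\inlen$, write $X(x) = \{2i - x_i \mid i \in [\inlen]\}$. For each coordinate $i$, if $x_{0,i} = x_{1,i}$ then $X(x_0)$ and $X(x_1)$ share the element $2i - x_{0,i}$, while if $x_{0,i} \neq x_{1,i}$ then $\{2i-1, 2i\} \subseteq X(x_0) \triangle X(x_1)$ with $X(x_0)$ and $X(x_1)$ each contributing one of the two. Hence $|X(x_0) \triangle X(x_1)| = 2\cdot\dist(x_0, x_1)$, so $\ham^t(x_0,x_1) = 1 \iff |X(x_0) \triangle X(x_1)| \geq 2t$.

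Given any PPT adversary $\adv$ breaking the PPH robustness with non-negligible probability $\varepsilon$, I would construct an adversary $\bdv$ against the robust set encoding $\efam$ with threshold $2t$: $\bdv$ receives $\efunc$, invokes $\adv(\efunc)$ to obtain $(x_0, x_1)$, and returns $(X(x_0), X(x_1))$. Let $X' = \efam.\decdiff(\efunc, X(x_0), X(x_1))$, so that $\eval(h, h(x_0), h(x_1)) = 1 \iff (X' = \bot \lor |X'| \geq 2t)$. A short case analysis on the output of $\eval$ versus $\ham^t(x_0,x_1)$ shows that whenever they differ, one of the two bad events in the robust set encoding definition must occur: either $X' \notin \{X(x_0) \triangle X(x_1), \bot\}$ (e.g., when $\eval$ returns $1$ with $|X'| \geq 2t$ while $|X(x_0) \triangle X(x_1)| < 2t$, so $X'$ cannot equal the true symmetric difference), or $|X(x_0) \triangle X(x_1)| < 2t$ together with $X' = \bot$ (when $\eval$ returns $1$ via the $\bot$-branch but $\ham^t(x_0, x_1) = 0$). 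Conversely, if $\eval$ returns $0$ while $\ham^t(x_0, x_1) = 1$, then $X' \neq \bot$ and $|X'| < 2t$, but $|X(x_0) \triangle X(x_1)| \geq 2t$, so again $X' \neq X(x_0) \triangle X(x_1)$. Thus $\bdv$'s advantage is at least $\varepsilon$, contradicting robustness of $\efam$.

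There is no serious obstacle here; the main points to get right are the identity $|X(x_0) \triangle X(x_1)| = 2 \dist(x_0,x_1)$ together with the choice of threshold $2t$ passed to $\efam.\sample$, and the careful case split ensuring every way $\eval$ can err maps to one of the two failure modes of the robust set encoding.
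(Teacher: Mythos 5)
Your proposal is correct and matches the paper's proof essentially step for step: you construct the same reduction $\bdv$ passing $X(x_b) = \{2i - x_{b,i}\}$ to the set-encoding challenger, rely on the same identity $|X(x_0)\triangle X(x_1)| = 2\dist(x_0,x_1)$, and perform the same case split showing that any disagreement between $\eval$ and $\ham^t$ forces one of the two failure events in the robust set encoding definition. The only cosmetic difference is that the paper writes the case analysis as a chain of rewritings of the probability expression rather than as a prose case split, but the content is identical.
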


\begin{proof}
  Let $\adv$ be an arbitrary PPT adversary against the direct-access robustness of $\hfam$.
  We construct an adversary $\bdv$ against the robustness of $\efam$ as follows.
  Upon input $e$, $\bdv$ invokes $\adv$ on input $h:=\efunc$.
  When $\adv$ outputs $x_0,x_1$, $\bdv$  outputs $X_0 := \{2i-x_{0,i} \mid i\in[\inlen]\}$ and $X_1 := \{2i-x_{1,i} \mid i\in[\inlen]\}$.
  We note that it holds that
  \begin{align}
    &\Pr\mleft[\begin{aligned} &h \gets \sample(\secparam);\\&(x_0, x_1) \gets \adv(h)\end{aligned} : \eval(h, h(x_0), h(x_1)) \neq \ham^t(x_0,x_1)\mright]\\
    \label{eq:defsamplehash}=&\Pr\mleft[\begin{aligned} &\efunc \gets \efam.\sample(\secparam, 2t);\\&(X_0, X_1) \gets \bdv(\efunc);\\
    &y_0:=\efam.\encode(\efunc,X_0);\\
    &y_1:=\efam.\encode(\efunc,X_1)
    \end{aligned} : \eval(\efunc, y_0, y_1) \neq \ham^t(x_0,x_1)\mright]\\
    \label{eq:defevalpred}=&\Pr\mleft[\begin{aligned} &\efunc \gets \efam.\sample(\secparam, 2t);\\&(X_0, X_1) \gets \bdv(\efunc);\\
    &X':=\decdiff(\efunc,X_0, X_1)
    \end{aligned} : 
    \begin{aligned}
      &(\dist(x_0,x_1)\geq t \land X'\neq\bot \land |X'| < 2t)\\
      \lor&(\dist(x_0,x_1) < t \land (X'=\bot \lor |X'| \geq 2t))
    \end{aligned}\mright]\\
    \label{eq:defsets}=&\Pr\mleft[\begin{aligned} &\efunc \gets \efam.\sample(\secparam, 2t);\\&(X_0, X_1) \gets \bdv(\efunc);\\
    &X':=\decdiff(\efunc,X_0, X_1)
    \end{aligned} : 
    \begin{aligned}
      &(\abs{X_0 \ssdiff X_1}\geq 2t \land X'\neq\bot \land |X'| < 2t)\\
      \lor&(\abs{X_0 \ssdiff X_1} < 2t \land (X'=\bot \lor |X'| \geq 2t))
    \end{aligned}\mright]\\
    \label{eq:splitclause}=&\Pr\mleft[\begin{aligned} &\efunc \gets \efam.\sample(\secparam, 2t);\\&(X_0, X_1) \gets \bdv(\efunc);\\
    &X':=\decdiff(\efunc,X_0, X_1)
    \end{aligned} : 
    \begin{aligned}
      &(\abs{X_0 \ssdiff X_1}\geq 2t \land X'\neq\bot \land |X'| < 2t)\\
      \lor& (\abs{X_0 \ssdiff X_1} < 2t \land X'\neq\bot \land |X'| \geq 2t)\\
      \lor&(\abs{X_0 \ssdiff X_1} < 2t \land X'=\bot)
    \end{aligned}\mright]\\
    \label{eq:combineclause}=&\Pr\mleft[\begin{aligned} &\efunc \gets \efam.\sample(\secparam, 2t);\\&(X_0, X_1) \gets \bdv(\efunc);\\
    &X':=\decdiff(\efunc,X_0, X_1)
    \end{aligned} : 
    \begin{aligned}
      &(X'\neq\bot \land \abs{X_0 \ssdiff X_1} \neq \abs{X'})\\
      \lor&(\abs{X_0 \ssdiff X_1} < 2t \land X'=\bot)
    \end{aligned}\mright]\\
    \label{eq:bsuccprob}\leq&\Pr\mleft[\begin{aligned} &\efunc \gets \efam.\sample(\secparam, 2t);\\&(X_0, X_1) \gets \bdv(\efunc);\\
    &X':=\decdiff(\efunc,X_0, X_1)
    \end{aligned} : 
    \begin{aligned}
      &X'\not\in\{X_0 \ssdiff X_1,\bot\}\\
      \lor&(\abs{X_0 \ssdiff X_1} < 2t \land X'=\bot)
    \end{aligned}\mright].\\
  \end{align}
\end{proof}
Here \autoref{eq:defsamplehash} follows from the definition of $\sample$ and $\hash$ and \autoref{eq:defevalpred} follows from the definition of $\eval$ as well as the exact hamming distance predicate.
\autoref{eq:defsets} follows from the definition of the sets $X_0,X_1$: for each position $i$ where the $x_{0,i}=x_{1,i}$, the sets share an element, whereas for every position where $x_{0,i}\neq x_{1,i}$, one of them contains the element $2i$ and the other $2i-1$, thus $d(x_0,x_1) = t \iff |X_0\ssdiff X_1| = 2t$.
Equations \ref{eq:splitclause} and \ref{eq:combineclause} follow by first splitting the bottom clause and then rewriting the top two clauses.

Finally, since $\efam$ is a robust set encoding it holds by assumption that the probability in \autoref{eq:bsuccprob} is negligible and the theorem thus follows.

\begin{corollary}
  Instantiating the construction from Figure~\ref{fig:exact-ham} using the robust set encoding from \autoref{sec:datastructure} with
  $\statpar=n=\secpar$ and $q=\sqrt{\secpar(2\inlen+3)}$ leads to a $\tfrac{2tkn\log q}{\inlen}=\tfrac{t\secpar^2\log (2\inlen+3)}{\inlen}$ compressing PPH for exact hamming distance.
\end{corollary}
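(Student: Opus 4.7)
The plan is to show two things: that the instantiation is a valid direct-access robust PPH, and that its compression ratio matches the claimed expression. Neither step involves any new cryptographic argument; the corollary is essentially a bookkeeping exercise that combines \autoref{thm:pph-ssd} with the robust-set-encoding theorem from \autoref{sec:datastructure} and the worst-case-to-average-case SIS reduction from \autoref{sec:lattices}.

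First I would invoke \autoref{thm:pph-ssd} with universe $[2\inlen]$ and the robust set encoding $\efam$ of \autoref{sec:datastructure}. By that theorem it suffices to verify that $\efam$ is a robust set encoding for the chosen parameters and to compute its encoding length $\enclen(\secpar, 2t)$. Robustness of $\efam$ follows from the theorem at the end of \autoref{sec:datastructure}, provided two conditions hold: (i) $\statpar \geq \max\{\secpar, 2\log_{3/e} m\}$ with $m=2\inlen$, which is satisfied by $\statpar=\secpar$ since $\inlen=\poly$ forces $2\log_{3/e}(2\inlen) \leq \secpar$ for large enough $\secpar$; and (ii) $(n, m, \sqrt{m+3}, 2)$-SIS is hard.

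Next I would discharge hardness of SIS by appealing to the corollary following the Micciancio--Peikert reduction in \autoref{sec:lattices}. With $n=\secpar$, $m=2\inlen=\poly$, $\solbound_\infty=2$, $\solbound_2=\sqrt{2\inlen+3}$ (so $\nu=3$), and $q=\sqrt{\secpar(2\inlen+3)}=\solbound_2 \cdot \secpar^{1/2}$, the condition $q > \solbound_2 \cdot n^\delta$ is met with $\delta=1/2$, so under \autoref{assumptionSIVP} no PPT adversary solves this instance of SIS with non-negligible probability. This completes direct-access robustness.

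Finally I would compute the compression ratio. The encoding is a $\statpar \times 2t$ matrix over $\ZZ_q^n$, so it occupies $\enclen(\secpar, 2t)=2t\statpar n\log q$ bits. Applying \autoref{thm:pph-ssd}, the resulting PPH is $\enclen(\secpar,2t)/\inlen = \tfrac{2t\statpar n\log q}{\inlen}$-compressing; plugging in $\statpar=n=\secpar$ and $\log q = \tfrac{1}{2}\log(\secpar(2\inlen+3))$ collapses this to the stated $\tfrac{t\secpar^2 \log(2\inlen+3)}{\inlen}$ bound (up to lower-order $\log\secpar$ terms absorbed into the $\tildebigO$ notation elsewhere in the paper). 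Since every step is a direct substitution into previously established results, there is no genuine obstacle; the only care needed is ensuring that the chosen $q$ simultaneously satisfies the SIS reduction's constraint and the range required for the set-encoding theorem.
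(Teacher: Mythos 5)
The paper gives no proof for this corollary, so there is no ``paper's route'' to compare against; your assembly of \autoref{thm:pph-ssd}, the robust-set-encoding theorem in \autoref{sec:datastructure}, and the SIS corollary in \autoref{sec:lattices} is exactly the filling-in the authors intended, and each of those invocations is discharged correctly (the parameter check that $\statpar=\secpar\geq 2\log_{3/e}(2\inlen)$ for large enough $\secpar$, and $q=\solbound_2\cdot\secpar^{1/2}>\solbound_2\cdot n^\delta$ with some $\delta\in(0,1/2)$, are both fine). One quantitative slip, which you inherit from the paper's own statement rather than introduce yourself, is worth flagging: the set encoding with threshold $T$ produces an $H\in(\ZZ_q^n)^{\statpar\times 2T}$, so $\enclen(\secpar,T)=2T\statpar n\log q$ and hence $\enclen(\secpar,2t)=4t\statpar n\log q$, not the $2t\statpar n\log q$ you (and the corollary) write. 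The encoding you call ``a $\statpar\times 2t$ matrix'' is really $\statpar\times 4t$ because $\efam.\sample$ is invoked with threshold $2t$ in Figure~\ref{fig:exact-ham}. This is only a constant-factor discrepancy and does not affect the asymptotic claim, and you correctly observe the separate $\log\secpar$ additive term hidden by the paper's final expression; still, since you present the computation as an exact equality, the off-by-two should be noted rather than silently reproduced.
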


\section{Lower Bound}
In this section, we show a lower bound on the output length of a PPH for exact Hamming distance. We prove the lower bound by reduction from indexing. In the indexing problem, there are two parameters $k$ and $m$. The first player Alice is given a string $x=(x_1,\dots,x_m) \in [k]^m$, while the second player Bob is given an integer $i \in [m]$. Alice sends a single message to Bob and Bob should output $x_i$. The following lower bound holds:
\begin{lemma}[\cite{MILTERSEN199837}]
  \label{lem:indexing}
  In any one-way protocol for indexing in the joint random source model with success probability at least $1-\delta > 3/(2k)$, Alice must send a message of size $\Omega((1-\delta)m \log k)$.
\end{lemma}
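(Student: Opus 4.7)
The plan is to use a standard information-theoretic lower bound via Fano's inequality. First I would fix the hard input distribution by drawing $x$ uniformly from $[k]^m$, independently of the joint random source $R$, with Bob's query $j$ also uniform and independent. Since the protocol succeeds with probability at least $1-\delta$ on every input, Bob's guess $B_j := B(M, R, j)$ will satisfy $\Pr[B_j = x_j] \geq 1-\delta$ for each fixed $j$ under this distribution.

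Next I would bound the message length by its entropy and expand via the chain rule. Since $M$ is a function of $(x, R)$ only, we have $|M| \geq H(M) \geq H(M \mid R) \geq I(x; M \mid R)$. The coordinates $x_1, \ldots, x_m$ being mutually independent and independent of $R$ then yields
\[
I(x; M \mid R) = \sum_{j=1}^m I(x_j; M \mid R, x_{<j}) \geq \sum_{j=1}^m I(x_j; M \mid R),
\]
since conditioning on the independent variables $x_{<j}$ cannot reduce the residual entropy of $x_j$. For each $j$ I would then apply Fano's inequality to Bob's estimator $B_j$ of $x_j$, giving $H(x_j \mid M, R) \leq h(\delta) + \delta \log(k-1)$, where $h$ denotes binary entropy. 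Since $x_j$ is uniform over $[k]$, this yields $I(x_j; M \mid R) \geq (1-\delta)\log k - h(\delta)$, and summing over $j$ produces $|M| \geq m \bigl[(1-\delta)\log k - h(\delta)\bigr]$.

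The hard part will be converting this raw bound into the clean asymptotic form $\Omega((1-\delta) m \log k)$ under the hypothesis $1 - \delta > 3/(2k)$. The difficulty is that when $\delta$ is close to $1 - 3/(2k)$, so that Bob's success rate only marginally exceeds the $1/k$ baseline of random guessing, the $h(\delta)$ correction can swallow the $(1-\delta)\log k$ main term unless a sharper analysis is applied. I expect that either a refined Fano-type inequality that explicitly exploits Bob's constant-factor advantage over the uniform-guessing baseline, or a direct covering-code argument (applying Markov's inequality to $d_H(x, \hat{x}_{M(x)})$ and counting how many $x$ can lie within an appropriately chosen Hamming ball around each decoded vector $\hat{x}_v$), will be needed to cleanly extract the full $\Omega((1-\delta) m \log k)$ bound across the range of $\delta$ permitted by the hypothesis.
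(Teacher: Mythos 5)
The paper does not prove this lemma; it is invoked as a black box from Miltersen, Nisan, Safra, and Wigderson~\cite{MILTERSEN199837}, so there is no in-paper argument to compare against. I will therefore assess the substance of your sketch on its own.

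The Fano-plus-chain-rule pipeline you set up is the standard information-theoretic route and the formal manipulations are correct: using that the coordinates are i.i.d.\ uniform and independent of the shared randomness $R$, you correctly obtain
\[
|M| \;\geq\; I(x;M\mid R) \;\geq\; m\bigl[\log k - h(\delta) - \delta\log(k-1)\bigr],
\]
which is precisely the rate--distortion lower bound for a uniform source on $[k]$ under Hamming distortion. You are also right to be uneasy about the last step: this expression is \emph{not} $\Omega((1-\delta)\log k)$ uniformly over the stated range $1-\delta > 3/(2k)$. Writing $\epsilon = 1-\delta$, for $\epsilon < 1/2$ one has $\log k - h(\delta) - \delta\log(k-1) \leq \epsilon\log(k\epsilon/e) + O(1/k)$, so as soon as $\epsilon = o(1/\sqrt{k})$ the main term $\epsilon\log k$ gets eaten by $h(\delta)$ and the bound degrades by a genuine $\Theta(\log k)$ factor relative to the claim (at $\epsilon = 3/(2k)$ Fano yields only $\Theta(m/k)$ rather than $\Theta(m\log k/k)$). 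So the gap you flag in your third paragraph is real, and the ``refined Fano'' or ``covering/counting'' step you describe is not an optional polish but the crux of the proof in the small-advantage regime; your write-up stops short of actually supplying it.

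Two remarks that may help. First, in the regime the PPH paper actually uses — it sets $k = \min\{\ell/t,\delta^{-1}/2\}$, so $\delta \leq 1/(2k)$ and $1-\delta \geq 1 - 1/(2k)$ — your Fano bound \emph{does} already give the claimed $\Omega((1-\delta)m\log k) = \Omega(m\log k)$ with room to spare, since $h(\delta) = O(\log k / k)$ there. Second, to cover the full hypothesis of the lemma you should carefully restate what ``success probability $1-\delta$'' means. Indexing with uniform $j$ is equivalent to block lossy compression of $x$ at average per-coordinate Hamming distortion $\delta$, and for that problem the rate--distortion lower bound is the truth; so a bound of the form $\Omega((1-\delta)m\log k)$ at $1-\delta = \Theta(1/k)$ would in fact \emph{exceed} the rate--distortion rate. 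This strongly suggests that the clean $\Omega((1-\delta)m\log k)$ form either requires $1-\delta$ to be bounded away from $1/k$ by more than a constant factor, or requires a per-index (rather than average) success guarantee. Before investing in a covering argument, I would first pin down the exact statement and success model in~\cite{MILTERSEN199837}, because the difficulty you are fighting may be an artifact of an over-generous reading of the lemma rather than a missing idea.
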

Here the joint random source model means that Alice and Bob have shared randomness that is drawn independently of their inputs. We prove the following lower bound:
\begin{theorem}
  Any PPH for the exact Hamming distance predicate on $\inlen$-bit strings with threshold $t$ and success probability at least $1-\delta$(This means that the robustness error is at most $\delta$.), must have an output length of $\Omega(t \log(\min\{\inlen/t, 1/\delta\}))$ bits.
\end{theorem}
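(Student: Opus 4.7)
The plan is to prove the lower bound by reduction from one-way indexing, invoking \autoref{lem:indexing}. Given any direct-access robust PPH for $\ham^t$ on $\bin^\inlen$ with robustness error at most $\delta$ and output length $L$, I will build a one-way indexing protocol in the joint random source model with message length $L$. Set $m := t$ and $k := \min\{\lfloor \inlen/t \rfloor, \lfloor 1/(2\delta) \rfloor\}$, and assume $k \geq 4$; otherwise the claimed bound is $\Omega(t)$, which is handled by a separate elementary counting argument on Hamming balls. These choices guarantee both $mk \leq \inlen$ (so the encoding below fits into $\bin^\inlen$) and $k\delta \leq 1/2$ (so a union bound over $k$ PPH queries keeps constant success probability).

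The protocol: Alice holds $x = (x_1, \dots, x_m) \in [k]^m$, Bob holds $i \in [m]$, and the shared randomness is used to sample $h \gets \sample(\secparam)$. Alice encodes $y_0 \in \bin^\inlen$ as the concatenation of $m$ length-$k$ blocks (padded with zeros at the end), where the $j$-th block is the unit vector $e_{x_j}$. She sends the $L$-bit string $\hash(h, y_0)$. For each guess $v \in [k]$, Bob constructs $y_1^{(v)} \in \bin^\inlen$ whose $i$-th block is $e_v$ and whose remaining bits are zero, and evaluates $b_v := \eval(h, \hash(h, y_0), \hash(h, y_1^{(v)}))$. He outputs the unique $v$ with $b_v = 0$. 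A direct computation gives $\dist(y_0, y_1^{(v)}) = t - 1$ if $x_i = v$ and $\dist(y_0, y_1^{(v)}) = t + 1$ otherwise, so $\ham^t(y_0, y_1^{(v)}) = 0$ iff $x_i = v$, and Bob's output is correct whenever all $k$ PPH queries return the right answer.

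For each fixed pair $(y_0, y_1^{(v)})$ the PPH errs with probability at most $\delta$ over the choice of $h$, as witnessed by the constant-size PPT adversary that ignores $h$ and outputs this hardcoded pair. A union bound over the $k$ queries gives total error at most $k\delta \leq 1/2$, so Bob recovers $x_i$ with probability at least $1/2$ on every input. Since $1 - k\delta \geq 1/2 > 3/(2k)$ for $k \geq 4$, \autoref{lem:indexing} applies with $\delta' = k\delta$ and yields $L \geq \Omega((1 - \delta')\, m \log k) = \Omega(t \log(\min\{\inlen/t, 1/\delta\}))$, as claimed. The key conceptual step is the block-unit-vector encoding: by concentrating each symbol's mass on a single bit inside its block, the Hamming distance between Alice's and Bob's strings lands at $t \pm 1$ depending only on whether Bob's guess matches $x_i$, straddling the threshold so that the PPH's single bit pinpoints one symbol per protocol run. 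The main obstacle is just the parameter accounting, which forces $k$ to be simultaneously large enough for the indexing lower bound to be nontrivial and small enough for the union bound over queries to go through — the minimum $\min\{\inlen/t, 1/(2\delta)\}$ is exactly the right parameter to meet both constraints.
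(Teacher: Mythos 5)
Your proof is correct and takes essentially the same route as the paper: a reduction from one-way indexing in the joint random source model (\autoref{lem:indexing}), with Alice encoding her $[k]^m$-string as a concatenation of unit-vector blocks, Bob probing with $k$ single-one-bit strings, and the Hamming distance between any probe and Alice's string landing on either side of the threshold $t$ depending on whether the guessed symbol is correct. The only cosmetic difference is that you set $m=t$ (giving distances $t-1$ and $t+1$) whereas the paper sets $m=t-1$ (giving distances $t-2$ and $t$); both straddle the threshold correctly and yield the same $\Omega(t\log k)$ bound. You are also slightly more careful than the paper in noting that \autoref{lem:indexing} requires $1-k\delta > 3/(2k)$, hence $k\geq 4$, and in flagging that the $k<4$ regime must be handled separately — the paper glosses over this. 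You do not actually supply the promised ``elementary counting argument on Hamming balls'' for $k\in\{2,3\}$ (where the claimed bound is a nontrivial $\Omega(t)$), so that remains a small gap, but conceptually the argument matches the paper's.
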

\begin{proof}
  Assume that there is a PPH-family $\hfam$ for the predicate $\ham^t$ and input length $\inlen$ with $t \leq \inlen$, such that for any strings $x,y$
  \[
    \Pr\mleft[h \gets \sample(\secparam) : \eval(h, h(x), h(y)) \neq \ham^t(x,y)\mright] \leq \delta.\footnote{Note that this is a strictly weaker requirement than direct access robustness, since $x,y$ are not adversarially chosen depending on $h$.}
  \]
  Let $s$ denote the output length of $\hfam$. We then use $\hfam$ to solve indexing with parameters $k = \min\{\inlen/t, \delta^{-1}/2 \}$ and $m = t-1$. When Alice receives a string $x \in [k]^m$, she constructs a binary string $y$ consisting of $m$ chunks of $k$ bits. If $mk < \inlen$, she pads this string with $0$'s. Each chunk in $y$ has a single $1$ in position $x_i$ and $0$'s elsewhere. She then computes the hash value $h(y)$, where $h$ is sampled from $\hfam$ using joint randomness, and sends it to Bob, costing $s$ bits.

From his index $i \in [m]$, Bob constructs $k$ bit strings $z_1,\dots,z_k$ of length $\inlen$, such that $z_j$ has a $1$ in the position corresponding to the $j$'th position of the $i$'th chunk of $y$, and $0$ everywhere else. He then computes the hash values $h(z_1),\dots,h(z_k)$ (using the joint randomness to sample $h$) and runs $\eval(h,h(y),h(z_j))$. Bob outputs as his guess for $x_i$, an index $j$, such that $\eval(h,h(y),h(z_j)) = 0$. Notice that the Hamming distance between $z_j$ and $y$ is $m+1>t$ if $j \neq x_i$ and it is $m-1<t$ otherwise. Thus if all $k$ evaluations are correct, Bob succeeds in reporting $x_i$. By a union bound, Bob is correct with probability at least $1-k \delta \geq 1/2$. By Lemma~\ref{lem:indexing}, we conclude $s = \Omega(t \log(\min\{\inlen/t, 1/\delta\}))$.
\end{proof}

\bibliographystyle{alpha}
\bibliography{bib/abbrev0,bib/crypto,bib/extraref}

\begin{thebibliography}{BEJWY20}

\bibitem[ADRS15]{STOC:ADRS15}
Divesh Aggarwal, Daniel Dadush, Oded Regev, and Noah {Stephens-Davidowitz}.
\newblock Solving the shortest vector problem in {$2^n$} time using discrete
  {Gaussian} sampling: Extended abstract.
\newblock In Rocco~A. Servedio and Ronitt Rubinfeld, editors, {\em 47th Annual
  {ACM} Symposium on Theory of Computing}, pages 733--742, Portland, OR, USA,
  June~14--17, 2015. {ACM} Press.

\bibitem[Ajt96]{STOC:Ajtai96}
Mikl{\'o}s Ajtai.
\newblock Generating hard instances of lattice problems (extended abstract).
\newblock In {\em 28th Annual {ACM} Symposium on Theory of Computing}, pages
  99--108, Philadephia, PA, USA, May~22--24, 1996. {ACM} Press.

\bibitem[ALNS20]{C:ALNS20}
Divesh Aggarwal, Jianwei Li, Phong~Q. Nguyen, and Noah {Stephens-Davidowitz}.
\newblock Slide reduction, revisited - filling the gaps in {SVP} approximation.
\newblock In Daniele Micciancio and Thomas Ristenpart, editors, {\em Advances
  in Cryptology -- {CRYPTO}~2020, Part~II}, volume 12171 of {\em Lecture Notes
  in Computer Science}, pages 274--295, Santa Barbara, CA, USA, August~17--21,
  2020. Springer, Heidelberg, Germany.

\bibitem[AMS96]{STOC:AloMatSze96}
Noga Alon, Yossi Matias, and Mario Szegedy.
\newblock The space complexity of approximating the frequency moments.
\newblock In {\em 28th Annual {ACM} Symposium on Theory of Computing}, pages
  20--29, Philadephia, PA, USA, May~22--24, 1996. {ACM} Press.

\bibitem[ASD18]{SOSA:AggSte18}
Divesh Aggarwal and Noah Stephens-Davidowitz.
\newblock {Just Take the Average! An Embarrassingly Simple $2^n$-Time Algorithm
  for SVP (and CVP)}.
\newblock In Raimund Seidel, editor, {\em 1st Symposium on Simplicity in
  Algorithms (SOSA 2018)}, volume~61 of {\em OpenAccess Series in Informatics
  (OASIcs)}, pages 12:1--12:19, Dagstuhl, Germany, 2018. Schloss
  Dagstuhl--Leibniz-Zentrum fuer Informatik.

\bibitem[BEJWY20]{PODS:BJWY20}
Omri Ben-Eliezer, Rajesh Jayaram, David~P Woodruff, and Eylon Yogev.
\newblock A framework for adversarially robust streaming algorithms.
\newblock In {\em Proceedings of the 39th ACM SIGMOD-SIGACT-SIGAI Symposium on
  Principles of Database Systems}, pages 63--80, 2020.

\bibitem[BEY20]{PODS:BY20}
Omri Ben-Eliezer and Eylon Yogev.
\newblock The adversarial robustness of sampling.
\newblock In {\em Proceedings of the 39th ACM SIGMOD-SIGACT-SIGAI Symposium on
  Principles of Database Systems}, pages 49--62, 2020.

\bibitem[Blo70]{ACM:Bloom70}
Burton~H Bloom.
\newblock Space/time trade-offs in hash coding with allowable errors.
\newblock {\em Communications of the ACM}, 13(7):422--426, 1970.

\bibitem[BLV19]{ITCS:BoyLaVVai19}
Elette Boyle, Rio LaVigne, and Vinod Vaikuntanathan.
\newblock Adversarially robust property-preserving hash functions.
\newblock In Avrim Blum, editor, {\em ITCS 2019: 10th Innovations in
  Theoretical Computer Science Conference}, volume 124, pages 16:1--16:20, San
  Diego, CA, USA, January~10--12, 2019. {LIPIcs}.

\bibitem[CPS19]{CCS:ClaPatShr19}
David Clayton, Christopher Patton, and Thomas Shrimpton.
\newblock Probabilistic data structures in adversarial environments.
\newblock In Lorenzo Cavallaro, Johannes Kinder, XiaoFeng Wang, and Jonathan
  Katz, editors, {\em ACM CCS 2019: 26th Conference on Computer and
  Communications Security}, pages 1317--1334. {ACM} Press, November~11--15,
  2019.

\bibitem[Don06]{IEEE:Donoho06}
David~L Donoho.
\newblock Compressed sensing.
\newblock {\em IEEE Transactions on information theory}, 52(4):1289--1306,
  2006.

\bibitem[FS21]{EC:FleSim21}
Nils Fleischhacker and Mark Simkin.
\newblock Robust property-preserving hash functions for hamming distance and
  more.
\newblock In {\em Annual International Conference on the Theory and
  Applications of Cryptographic Techniques}. Springer, 2021.

\bibitem[GM11]{ALL:GooMit11}
Michael~T Goodrich and Michael Mitzenmacher.
\newblock Invertible bloom lookup tables.
\newblock In {\em 2011 49th Annual Allerton Conference on Communication,
  Control, and Computing (Allerton)}, pages 792--799. IEEE, 2011.

\bibitem[HW13]{STOC:HarWoo13}
Moritz Hardt and David~P. Woodruff.
\newblock How robust are linear sketches to adaptive inputs?
\newblock In Dan Boneh, Tim Roughgarden, and Joan Feigenbaum, editors, {\em
  45th Annual {ACM} Symposium on Theory of Computing}, pages 121--130, Palo
  Alto, CA, USA, June~1--4, 2013. {ACM} Press.

\bibitem[IM98]{STOC:IndMot98}
Piotr Indyk and Rajeev Motwani.
\newblock Approximate nearest neighbors: Towards removing the curse of
  dimensionality.
\newblock In {\em 30th Annual {ACM} Symposium on Theory of Computing}, pages
  604--613, Dallas, TX, USA, May~23--26, 1998. {ACM} Press.

\bibitem[LLL82]{LLL82}
Arjen~K Lenstra, Hendrik~Willem Lenstra, and L{\'a}szl{\'o} Lov{\'a}sz.
\newblock Factoring polynomials with rational coefficients.
\newblock {\em Mathematische annalen}, 261:515--534, 1982.

\bibitem[MNS08]{STOC:MirNaoSeg08}
Ilya Mironov, Moni Naor, and Gil Segev.
\newblock Sketching in adversarial environments.
\newblock In Richard~E. Ladner and Cynthia Dwork, editors, {\em 40th Annual
  {ACM} Symposium on Theory of Computing}, pages 651--660, Victoria, BC,
  Canada, May~17--20, 2008. {ACM} Press.

\bibitem[MNSW98]{MILTERSEN199837}
Peter~Bro Miltersen, Noam Nisan, Shmuel Safra, and Avi Wigderson.
\newblock On data structures and asymmetric communication complexity.
\newblock {\em Journal of Computer and System Sciences}, 57(1):37--49, 1998.

\bibitem[MP13]{C:MicPei13}
Daniele Micciancio and Chris Peikert.
\newblock Hardness of {SIS} and {LWE} with small parameters.
\newblock In Ran Canetti and Juan~A. Garay, editors, {\em Advances in
  Cryptology -- {CRYPTO}~2013, Part~I}, volume 8042 of {\em Lecture Notes in
  Computer Science}, pages 21--39, Santa Barbara, CA, USA, August~18--22, 2013.
  Springer, Heidelberg, Germany.

\bibitem[Mut03]{SODA:Muthukrishnan03}
S.~Muthukrishnan.
\newblock Data streams: algorithms and applications.
\newblock In {\em 14th Annual {ACM}-{SIAM} Symposium on Discrete Algorithms},
  pages 413--413, Baltimore, MD, USA, January~12--14, 2003. {ACM-SIAM}.

\bibitem[NY15]{C:NaoYog15}
Moni Naor and Eylon Yogev.
\newblock Bloom filters in adversarial environments.
\newblock In Rosario Gennaro and Matthew J.~B. Robshaw, editors, {\em Advances
  in Cryptology -- {CRYPTO}~2015, Part~II}, volume 9216 of {\em Lecture Notes
  in Computer Science}, pages 565--584, Santa Barbara, CA, USA, August~16--20,
  2015. Springer, Heidelberg, Germany.

\end{thebibliography}

\end{document}